\title{Distributed Dispatching in the Parallel
Server Model}
\author{ 
    Guy Goren\\
	\texttt{\small Technion}
	\and 
	Shay Vargaftik\\
	\texttt{\small VMware Research}
	\and 
	Yoram Moses\\
	\texttt{\small Technion}
}
\newcommand{\brec}[2]{\bar{\receives}_{#1}^{(#2)}}
\newcommand{\veca}{\vec{a}}
\newcommand{\LambdaOne}{\lambda^{(1)}}
\newcommand{\LambdaTwo}{\lambda^{(2)}}
\newcommand{\MuOne}{\mu^{(1)}}
\newcommand{\MuTwo}{\mu^{(2)}}
\newcommand{\setD}{{\cal D}}
\newcommand{\setS}{{\cal S}}
\newcommand{\receives}{g}%how many jobs a server receives
\newcommand{\sent}[2]{g^{(#1)}_{#2}}
\newcommand{\serves}{s}% how any jobs complete
\newcommand{\underWL}{\receives^*_{\scalebox{.4}{\faPlus}}}
\newcommand\set[1]{\{#1\}}
\newcommand\T[1]{\noindent\textbf{#1}}
\newcommand{\Binomial}[2]{\text{Bin}\left(#1,#2\right)}
\newcommand{\WFiE}{WFiE}
\newcommand{\brac}[1]{\left\{ #1 \right\}}
\newcommand{\naturals}{\mathbb{N}}
\newcommand{\setM}{\setD}%{\mathcal{M}}
\newcommand{\setU}{\mathcal{U}}
\newcommand{\wl}{\textsc{WL}}
\newcommand{\wlf}[2]{\textsc{WaterLevel}(#1,#2)}
\newcommand{\setN}{\setS}%{\mathcal{N}}
\newcommand{\setL}{{\mathcal{L}}} %defining a set
\newcommand{\bp}[1]{\Big(#1\Big)}
\newcommand{\TWLs}{\textsc{TWF}}
\newcommand{\sTWF}{s\textsc{TWF}}
\newcommand{\uTWF}{u\textsc{TWF}}
\providecommand{\ie}{\emph{i.e.,} }
\providecommand{\eg}{\emph{e.g.,} }
\DeclareMathOperator{\E}{\mathbb{E}}
\newtheorem{lemma}{Lemma}
\newtheorem{definition}{Definition}
\begin{document}
\date{}
\maketitle

%%%%%%%%%%%%%%%%%%%%%%%%%%%%%%%%%%%%%%%%%%%%%%%%%%%%%%%%%%%%%%%%%%
%%%%%%%%%%%%%%%%%%%%%%%%%%%%%%%%%%%%%%%%%%%%%%%%%%%%%%%%%%%%%%%%%%

\begin{abstract}
With the rapid increase in the size and volume of cloud services and data centers, architectures with multiple job dispatchers are quickly becoming the norm. Load balancing is a key element of such systems. Nevertheless, current solutions to load balancing in such systems admit a paradoxical behavior in which more accurate information regarding server queue lengths degrades performance due to herding and detrimental incast effects. Indeed, both in theory and in practice, there is a common doubt regarding the value of information in the context of multi-dispatcher load balancing. As a result, both researchers and system designers resort to more straightforward solutions, such as the power-of-two-choices to avoid worst-case scenarios, potentially sacrificing overall resource utilization and system performance. 
A principal focus of our investigation concerns the value of information about queue lengths in the multi-dispatcher setting. We argue that, at its core, load balancing with multiple dispatchers is a distributed computing task. In that light, we propose a new job dispatching approach, called \emph{Tidal Water Filling}, which addresses the distributed nature of the system.
Specifically, by incorporating the existence of other dispatchers into the decision-making process, our protocols outperform previous solutions in many scenarios. 
In particular, when the dispatchers have complete and accurate information regarding the server queues, our policies significantly outperform all existing solutions.
\end{abstract}

%%%%%%%%%%%%%%%%%%%%%%%%%%%%%%%%%%%%%%%%%%%%%%%%%%%%%%%%%%%%%%%%%%
%%%%%%%%%%%%%%%%%%%%%%%%%%%%%%%%%%%%%%%%%%%%%%%%%%%%%%%%%%%%%%%%%%

\section{Introduction}
\label{sec:Intro}
Large software systems that govern the operations of data centers and of cloud-based services are important components of modern-day computing infrastructure. Such systems process a large volume of jobs that often arrive at distinct locations and are serviced by a multitude of servers.  How jobs are assigned to servers, and in particular load balancing the servers' job queues, plays a central role in determining the effectiveness of the system's overall performance.

The traditional approach to load balancing in this setting, known as the \textit{supermarket model}  \cite{weber1978optimal,winston1977optimality,eryilmaz2012asymptotically}, employs a single centralized dispatcher to which all requests are forwarded, and from which they are assigned to the servers. In the last decade, with the increasing size of cloud services and applications, using a single dispatcher has become a problematic bottleneck. System designers have consequently shifted to architectures that employ multiple dispatchers \cite{gandhi2014duet,eisenbud2016maglev,prekas2017zygos}. 
The load balancing problem for such multi-dispatcher systems is a very natural and urgent distributed systems problem. However, to the best of our knowledge, it has received only limited attention in the literature (see Section~\ref{sec:related_work}). 
The current paper considers the problem from a distributed computing perspective.
We consider a setting  with~$M$ dispatchers and $N$ servers.   Jobs arrive at the dispatchers according to a stochastic process, and the servers complete jobs at a stochastic rate. 
Each dispatcher observes the jobs that it receives, but not those received by the other dispatchers. Dispatchers interact with the servers when they send them jobs, and can also communicate with them in order to obtain information regarding the lengths of the servers' queues. 

Load balancers are typically evaluated in terms of their behavior under heavy load, which occurs when the rate of job arrivals approaches the rate at which servers are able to process the jobs. 
A central factor in determining the quality of a load balancing protocol is the response time that it offers the clients. A natural measure is thus the expected response time (or {\it latency}) for jobs submitted to the system. In some cases, however, a client's task is broken up into multiple jobs, and the task is successfully served only when the last of these jobs is processed. 
For such instances, it is also important to meet a desired tail latency for the 95th, 99th, or even the 99.9th percentile of the distribution \cite{dean2013tail,nishtala2017hipster}.
Indeed, according to publications by Google and Amazon, even a small sub-second addition to response time in dynamic content websites has led to a persistent loss of users and revenue \cite{lu2011join,schurman2009user}.

A principal focus of our investigation concerns the value of information about queue lengths in the multi-dispatcher setting. As discussed below, several well-known load balancing policies behave poorly when the dispatchers' information is highly correlated. They suffer from so-called ``herd behavior,'' in which, when dispatchers identify the same servers as having short queues, they all send their jobs to these servers. As a consequence, the servers become overloaded, resulting in poor performance. This is especially acute when dispatchers share considerable information about the queue lengths. Indeed, in describing their solutions, recent papers have made statements such as
``{\em Inaccurate information can lead to better performance}''~\cite{vargaftik2020lsq}, or  ``{\em Inaccurate information can improve performance}''
\cite{zhou2020asymptotically}.  
Interestingly, the value of accurate information is doubted not only by theoreticians.
In fact, open source load balancer deployments such as HAProxy \cite{haproxy_po2} and NGYNX \cite{ngynx_po2} as well as cloud service companies such as Fastly \cite{youtube_lec} and Netflix \cite{netflixEdge} report making use of limited queue-size information in order to avoid detrimental herd behavior effects. 
This all appears to be rather perplexing and counter-intuitive. 
Is it indeed the case that there is such thing as too much information in the load-balancing context? Na\"{i}vely, at least, we would expect information to be valuable in a resource allocation context such as this. 

In this paper, we consider why popular policies suffer from herd behavior and suggest new policies that avoid it. We argue that in  policies that exhibit herd behavior, a dispatcher can be thought of as optimizing its behavior w.r.t.\ its information in a manner that is oblivious to the presence of other dispatchers. We suggest that this can be avoided by an analysis that explicitly accounts for the fact that other dispatchers are present. 
To focus on these issues, we start by considering an idealized setting in which all dispatchers have complete information regarding the queue sizes. Each dispatcher has access to the job requests that it receives, but does not know how many jobs each of the other dispatchers receives. (The distributions of arrival rates at dispatchers are unknown but assumed to be the same, as are those of server processing rates.) In this setting, we propose a new load-balancing policy called  {\em Tidal Water Filling} (TWF), and prove that it is in a precise sense optimal for the case of complete information. 
Indeed, we show that it improves on all known policies.
Finally, we demonstrate that its performance improves as the amount of information available to the dispatchers increases.
{\bf These results establish  that, when used appropriately, information can be gainfully used for load balancing in the multi-dispatcher setting}. These are the main contributions of the paper. 

Since the performance of Tidal Water Filling improves with available information regarding queue sizes, we turn to the question of how limited communication can be exploited to boost this information, thereby further improving the performance of TWF. To this end, we design a variant of the \TWLs{} protocol in which both servers and dispatchers keep track of queue size information. Moreover, whenever a dispatcher interacts with a server, the two share the information they have. The information is consistently updated by using standard distributed systems techniques such as timestamping the size data. 
We show using simulations that this approach allows a significant increase in the amount of relevant data used by dispatchers even when communication is limited, 
and, in turn, markedly improves the load balancing performance.

\subsection{Related Work}\label{sec:related_work}

In the standard supermarket (\ie parallel server) model, it is well known that if the (single) dispatcher has complete information regarding the server queues, the protocol that routes each job to the server with the shortest queue (called {\em Join the Shortest Queue} and denoted by JSQ) offers strong performance and strong theoretical guarantees \cite{weber1978optimal,winston1977optimality,eryilmaz2012asymptotically}. JSQ has also motivated the design of reduced-state load balancing techniques for resource-constrained scenarios in which the dispatcher is exposed to only partial information about the server queue lengths. For example, in {\em Power-of-$d$-choices}, denoted by JSQ($d$) \cite{luczak2006maximum,vvedenskaya1996queueing,mitzenmacher2001power}, when a job arrives, a dispatcher randomly probes $d$ servers and assigns the job to a server with the shortest queue among them.
A related strategy is called {\it  Power of memory}, denoted by  JSQ($d,m$) \cite{shah2002use,mitzenmacher2002load}. In JSQ($d,m$), the dispatcher samples the $m$ shortest queues to whom it sent jobs in the latest round,  in addition to $d \ge m \ge 1$ new  randomly chosen servers. The job is then routed to the shortest among these $d + m$ queues.

In the last decade, with the increasing size of cloud services and applications, the need to scale horizontally drove system designers to introduce multiple dispatchers into their design as a single dispatcher could no longer utilize hundreds and thousands of servers \cite{gandhi2014duet,eisenbud2016maglev,prekas2017zygos}. In such multi-dispatcher systems, traditional solutions such as JSQ suffer from detrimental herd behaviour and therefore systems operators abandon the use of readily available information and turn to reduced-state approaches such as JSQ($d$) potentially sacrificing overall system performance and reduced resource utilization in order to prevent worst-case scenarios \cite{netflixEdge,haproxy_po2,ngynx_po2,youtube_lec,lu2011join,schurman2009user}. 

In the search for a better alternative for the multi-dispatcher load balancing scenario, Join-the-idle-queue (JIQ) \cite{stolyar2015pull,lu2011join,mitzenmacher2016analyzing,stolyar2017pull,wang2018distributed} was recently proposed.
In JIQ, dispatchers are notified only by idled servers.
In turn, a dispatcher sends jobs to an idle server when it is aware of one, or to a randomly selected server otherwise. JIQ was shown to significantly improve performance at low and moderate loads over JSQ($d$)  due to its immediate prevention of server starvation~\cite{lu2011join}. However, at higher loads, its performance resembles random routing due to the absence of idle servers and its performance deteriorates quickly \cite{zhou2019heavy}. 
The Persistent-Idle load-distribution policies recently addressed this drawback of JIQ \cite{atar2020persistent,atar2021persistent}. However, so far, these policies have been proposed and analyzed only for a single dispatcher system. 

The most recent advances on load balancing for multi-dispatcher systems appeared in~\cite{vargaftik2020lsq,zhou2020asymptotically}. 
In the Local Shortest Queue policy (LSQ), proposed in~\cite{vargaftik2020lsq}, each dispatcher keeps a local state with possibly outdated queue size information, which is infrequently updated.
A dispatcher then sends jobs to a shortest queue by its local estimation.
This can be viewed as a generalization of similar ideas suggested for single-dispatcher systems \cite{anselmi2020power,van2019hyper}.
LSQ was followed by LED~\cite{zhou2020asymptotically}, which extended the theoretical performance guarantees to a wider family of \textit{tilted} dispatching policies. 
These local-state-driven policies were shown to outperform previous policies considered for the multi-dispatcher model such as JSQ, JSQ($d$) and JIQ. Intuitively, this is because they use considerably less information than JSQ, which reduces herding, and, on the other hand, maintain local states at the dispatchers allowing for a long term memory and preventing many events in which no single good server is discovered.

However, the aforementioned policies admit a paradoxical behavior in which accurate information, or even partial but correlated information among the dispatchers, degrades their performance.
This is because, like their complete state-information JSQ counterpart, having shared information about good servers leads them to herd-behavior and detrimental incast effects that increase tail latency.
That is, in the multiple-dispatchers case, when updated information is available to different dispatchers, they all send at once all incoming jobs to the servers with the currently-shorter queues, overwhelming them with the accumulated traffic. 
This phenomenon has already been pointed out in \cite{mitzenmacher2000useful}, which suggested the importance of using randomness to break the symmetry.

Another line of work concerning load balancing in distributed systems is based on the balls-into-bins model \cite{adler1998parallel}. 
Recent approaches commonly apply regret minimization (\eg \cite{kleinberg2011load}) and adaptive techniques (\eg \cite{lenzen2011tight}), producing more precise theoretical bounds.
However, their model assumptions are not aligned with our model (\eg we consider stochastic arrivals at each dispatcher and stochastic departures at each server).
Consequently, their analysis does not directly apply in our model and vice versa.

%%%%%%%%%%%%%%%%%%%%%%%%%%%%%%%%%%%%%%%%%%%%%%%%%%%%%%%%%%%%%%%%%%
%%%%%%%%%%%%%%%%%%%%%%%%%%%%%%%%%%%%%%%%%%%%%%%%%%%%%%%%%%%%%%%%%%

\section{Model}
\label{sec:Model}

We consider a  system with a set $\setD$ of~$M$ dispatchers and a set $\setS$ of~$N$ servers. Dispatchers can communicate with servers over communication channels or via shared memory, but there is no direct communication among dispatchers.
The network is the complete undirected bipartite graph with edge set $E=\setD\times\setS$, and both jobs and standard messages can be sent over the edges of~$E$.%
\footnote{The high rate of incoming jobs at the dispatchers makes interaction among them undesirable. As a result, the assumption that dispatchers do not interact is standard practice \cite{mitzenmacher2000useful,zhou2020asymptotically,vargaftik2020lsq,wang2018distributed,stolyar2017pull,mitzenmacher2016analyzing,stolyar2015pull,lu2011join,haproxy_po2,ngynx_po2}. }
The system proceeds over discrete synchronous rounds. Time starts at time~0, and round~$t+1$ occurs between time~$t$ and time~$t+1$. 
Each round consists of four phases: First, every dispatcher receives an external input with a set of job requests. Second, every dispatcher sends each of the jobs it received to a server for processing, and every job received by a server is added to its job queue. 
Third, each of the servers completes processing a set of (zero or more) jobs from the head of its queue and reports the results to the appropriate clients. (This is where jobs depart from the system.) Finally, in a potential fourth phase of the round, dispatchers and servers may communicate information about the status of the server queues. More formally, the four phases are:

\begin{enumerate}
    \itemsep1.4em
    \item \T{Arrivals:} Some number, $a^{(m)}(t)$, of exogenous jobs arrive at dispatcher~$m$ at the beginning of round~$t$. (We denote $a(t) = \sum_{m \in \setD} a^{(m)}(t)$.) Job arrivals are governed by stochastic processes. For simplicity, we assume that  $a^{(m)}(t)$ are {\em i.i.d.} random variables governed by the same distribution which, according to standard practice, \emph{is not assumed to be known}. 
    
    \item \T{Dispatching.} In every round, each dispatcher forwards the jobs it received to the servers for service. We consider two variants of dispatching that address two distinct affinity constraints.
    In one, termed {\em splittable dispatching}, the dispatcher assigns each of the jobs to a server of its choice.
    This handles a setting with no affinity constraints. That is, when the jobs arriving at a dispatcher~$m$ can be processed independently.
    We separately consider {\em unsplittable dispatching}, in which all jobs that arrive at~$m$ in a given round must be forwarded to the same server. 
    This handles a setting with strict affinity constraints.
    Where, \eg the jobs arriving at~$m$ in a given round share data or resources.

    \item \T{Departures.} Each server maintains a FIFO queue that keeps track of its pending jobs. We denote by~$Q_n(t)$ the  length of server~$n$'s queue at the beginning of round $t$ (before any job arrivals and departures) and denote $Q(t) \triangleq  \langle Q_1(t),\dots,Q_N(t)\rangle$. 
    Moreover, we denote by $\sent{m}{n}(t)$ the number of jobs that server~$n$ receives from dispatcher~$m$ in round~$t$. 
    Moreover, $\receives_n(t)=\sum_{m\in\setD}\sent{m}{n}(t)$ denotes the total number of jobs sent to server~$n$ in round~$t$. 
    We denote server~$n$'s job completion rate (i.e., the number of jobs that it is able to complete)  in round~$t$ by $\serves_n(t)$. 
    We thus have that  $Q_n(t+1)= \max\{0,Q_n(t)+\receives_n(t)-\serves_n(t)\}$. I.e., server~$n$ completes $\serves_n(t)$ jobs in round~$t$ if it has that many jobs to process.  Otherwise, it completes all $Q_n(t)+\receives_n(t)<\serves_n(t)$ jobs in its possession. 
    As in the case of arrivals, $\serves_n(t)$ is assumed to be stochastically determined. Again, for ease of exposition their distribution is assumed to be the same for all servers.  
    
    \item \T{Communication.} In the fourth phase of a round, dispatchers obtain information about the queue sizes at the servers. 
    We will consider two settings. In the {\em complete information} case, every dispatcher is informed of all queue sizes%
    \footnote{In practice, such information could be gathered in different ways. E.g., by a shared bulletin board or shared memory, as well as by having servers update a central process, who can forward a single message to with $Q(t)$ to each dispatcher.} 
    and so, at the start of round~$t$, it knows $Q(t)$.
    The {\em incomplete information} setting is one in which queue information is communicated over the channels of $E=\setD\times\setS$, without all servers communicating with all dispatchers in every round. 
    
\end{enumerate}

\T{Admissibility.} For a setting as above to be feasible, it must hold that the servers' processing power is sufficient for handling the incoming job requests.
Denoting ${\serves(t) = \sum_{n \in \setS} \serves_n(t)}$, we define the {\em load} to be $\rho=\mathbb{E}[a(0)]/\mathbb{\E}[\serves(0)]$. 
To be admissible, it must hold that $\rho<1$.\footnote{In the appendix, for the purpose of mathematical stability analysis, we also make the standard assumption that the arrival and departure processes admit a finite variance, \ie $\mbox{Var}(a(0)) < \infty$ and $\mbox{Var}(s(0)) < \infty$.} 

%%%%%%%%%%%%%%%%%%%%%%%%%%%%%%%%%%%%%%%%%%%%%%%%%%%%%%%%%%%%%%%%%%
%%%%%%%%%%%%%%%%%%%%%%%%%%%%%%%%%%%%%%%%%%%%%%%%%%%%%%%%%%%%%%%%%%

\section{Stochastic Coordination}
\label{sec:Silent Coordination:subsec:Stochastic Coordination}\label{sec:Stochastic}

As a first step towards designing an effective multi-dispatcher policy, let us consider the problem in the simpler, single dispatcher, case.  
In round~$t$, this dispatcher has access to the vector $Q(t)$ of queue sizes at the servers, and to the number $a(t)$ of jobs that have been submitted to the system.
If the dispatcher is free to send different jobs to different servers then it will, intuitively, dispatch jobs one by one according to the JSQ principle.
It will send the first job to a queue of shortest length, and then iterate through the jobs, each time sending the next job to a queue of smallest size given the jobs that it has already assigned. Consequently, all queues to which jobs are sent end up with the same sizes (give or take~1). We can view this as being analogous to a process of filling water into a container: Consider a water container with an uneven bottom at heights  that correspond to the histogram 
(rearranged in sorted order) defined by~$Q(t)$. If we should pour a volume of $a(t)$ units of water into the container, then the water level will coincide with the height of the queues to which jobs were dispatched, up to a rounding error due to the fact that water is continuous and jobs are discrete. The largest amount of water would be poured into the deepest column, just as the largest number of jobs would be sent to the shortest queue. 

On a given input $(Q,a)$,  \Cref{alg:WaterLevel}  computes the water level $\wl=\wlf{Q}{a}$ that results from pouring~$a$ units of water into a container with bottom shaped according to~$Q$. The height of each column once the water is poured would be $Q^*\triangleq(Q^*_1,\dots,Q^*_N)$ where
$Q^*_n\triangleq \max\{Q_n, \wl\}$.
We remark that~$Q^*_n$ is not always an integer but might also be a rational number.
If queue lengths are maintained in sorted order, then \wl{} is efficiently computable in $O(\min(N,a))$ time complexity. Measured running times, showing that our policy's runtime scales similarly to JSQ, can be found in Appendix \ref{app:rtmeasuments}.

%%=============================%%
\begin{algorithm}[ht]
	\footnotesize
    \begin{algorithmic}[1]
     \Function{WaterLevel}{$Q, a$} \Comment{$Q$ is the multiset of queue lengths; $a$ is the total number of arrivals;}
     \\\Comment{$Min$ returns the minimal value in a multiset.\hskip2.87cm$~$ }
         \While{$a > 0$}
        \State $MinSet \gets \set{Q_n \in Q \mid Q_n = Min(Q)}$ \Comment{The set of all minimal queues.}
        \If{$|MinSet| = |Q|$} \Comment{$|\cdot|$ denotes the cardinality of a set.}
            \State \Return $Min(Q) + a/|Q|$ 
        \EndIf
        \State $NextMin \gets Min(Q \backslash MinSet)$
        \State $\delta \gets NextMin - Min(Q)$
        \If{$\delta \cdot |MinSet| < a$}
  %     	\State $Q_n = Q_n + \delta,\quad \forall \, Q_n \in MinSet$ \Comment{\yoram{this should be a {\bf for} statement}\shay{added below...}}
       	    \State $a \gets a - \delta \cdot |MinSet|$ 
       	    \For{$Q_n \in MinSet$}
       	        \State $Q_n \gets Q_n + \delta$
       	    \EndFor
        \Else
            \State \Return $Min(Q) + a/|MinSet|$
        \EndIf
        \EndWhile
    \EndFunction
    \end{algorithmic}   
	\normalsize
    \caption{Computing the water level.}
    \label{alg:WaterLevel}
\end{algorithm}
%%=============================%%

In a multi-dispatcher system dispatchers make decisions independently of each other.
It is thus only natural for them to independently optimize their load balancing decisions.
Namely, to dispatch jobs in exactly the same manner as if they were alone in the system.
This is indeed the case in current solutions.
However, in a system where one is not alone, such oblivious behavior of disregarding the others may result in sub-optimal performance \cite{ngynx_po2,youtube_lec,vargaftik2020lsq,zhou2020asymptotically,netflixEdge}.
This occurs when the same server is identified as the best destination by different dispatchers.
These dispatchers then simultaneously forward jobs to this server, causing its queue to grow rapidly, increasing delay times and sometimes even causing the server to drop jobs.
However, the fact that dispatchers make decisions independently does not mean that their decision making protocols must be independently optimized.

In the multi-dispatcher context that we are considering, dispatchers cannot directly coordinate their actions in every given round, since they do not directly communicate with each other. Nevertheless, it is possible to design their protocols in such a way that their actions will be compatible with each other, and will not conflict. The key to doing so is employing {\em randomized protocols}, in which the dispatchers' moves are stochastic.  Indeed, randomization has been a standard tool for symmetry breaking in distributed computing for over four decades \cite{Rabin80,Lehmann81}.  Our goal will be to design probabilistic load balancing protocols that will provide good performance by optimizing the cumulative behavior of the dispatchers. 
This will provide the dispatchers with a silent form of stochastic coordination.

%%%%%%%%%%%%%%%%%%%%%%%%%%%%%%%%%%%%%%%%%%%%%%%%%%%%%%%%%%%%%%%%%%
%%%%%%%%%%%%%%%%%%%%%%%%%%%%%%%%%%%%%%%%%%%%%%%%%%%%%%%%%%%%%%%%%%

\section{Tidal Water Filling}
\label{sec:Tidal Water Filling}

Focusing on the complete information setting, we assume that each dispatcher~$m$  has access to the number $a^{(m)}=a^{(m)}(t)$ of jobs that it has received in the current round, and to the vector of server queue sizes $Q=Q(t)$ (we shall omit the round number~$t$ when it is clear from context). 
Based on~$Q$ and $a^{(m)}$, it needs to decide where to send each job. 
We seek a solution that will be feasible to compute and amenable to analysis. In particular, we seek a policy that 
(1) is uniform for all dispatchers; given the same~$Q$ and $a^{(m)}=a^{(m')}$, both $m$ and~$m'$ should act in the same way, and in which 
(2) a dispatcher treats all jobs uniformly. 
Hence, the output of $m$'s computation is a vector $\langle p_1,\ldots,p_N \rangle$, where $p_n$ is the probability that any given job will be sent by~$m$ to server~$n$, for every $n\in\setS$. 

We denote by $\brec{n}{m}$ the random variable specifying the number of jobs sent to server~$n$ by dispatcher~$m$. 
The total number of jobs received by~$n$ is $\bar{\receives}_n=\sum_{m\in\setD}\brec{n}{m}$, and its queue size once it receives them is the random variable $\bar{Q}_n\triangleq Q_n+\bar{\receives}_n$.
Finally, we shall denote $\bar{Q}\triangleq\langle\bar{Q}_1,\ldots,\bar{Q}_N\rangle$.

Recall that~$Q$ and the total number of jobs $a=a(t)$ determine a water-filling solution $Q^*=Q^*(Q,a)$ as described in Section~\ref{sec:Stochastic}. Our goal will be to design a policy that  computes the dispatching probabilities~$P$ in such a way that the resulting  queue sizes~$\bar{Q}$ approximate~$Q^*$ as well as possible. More formally, we wish to minimize the $L_2$ distance between~$Q^*$ and~$\bar{Q}$. 
Intuitively, a large distance from the water level \wl$=$\wlf{$Q$}{$a$} induces a large  delay in response times (for a positive difference) or server starvation and lesser resource utilization (negative difference). Since we seek to avoid long delay tails as well as unnecessary server idleness, we consider a large deviation from the \wl{} to be worse than several small ones.
This rules out linear or sub-linear distance measures such as the $L_1$ distance.
On the other hand, giving too much weight to large deviations may miss opportunities to optimize the mean. For example, the $L_\infty$ distance (\ie min-max) addresses only the largest deviation.
We therefore choose to use the $L_2$ distance, since it balances these two desires and is amenable to formal analysis.

We denote the vector of job arrivals at the dispatchers by $\veca=\langle a^{(1)},\ldots,a^{(M)}\rangle$.  
As an interim step, we derive a policy that computes the dispatching probabilities based on~$Q$ and the full vector $\veca$ of jobs that arrive in the round, and not only the allocation $a^{(m)}$ of a single dispatcher~$m$.  We will later discuss how this analysis can be applied to an individual dispatcher's computation. 
Notice that $Q$ and~$\veca$ uniquely determine a (fixed) vector $Q^*$ resulting from water filling~$Q$ with $a = \sum_{m\in\setD} a^{(m)}$ new jobs.
Now, a policy $P(Q,\veca)$ gives rise to the random variable vector $\bar{Q}$ of queue sizes, as described above.

Recall that $\bar{\receives}_n = \bar{Q}_n-Q_n$. Similarly, we denote ${\receives^*_n \triangleq Q^*_n - Q_n}$. Our goal is to minimize 
\begin{equation}\label{eq:l2_error_development}
\begin{aligned}
        \mathbb{E}\lVert Q^*-\bar{Q}\rVert_2^2 ~=~~ 
        &\mathbb{E}\lVert (Q_1^*-\bar{Q}_1,\dots,Q^*_N-\bar{Q}_N)^T\rVert_2^2 ~~= \cr
        &\mathbb{E}\lVert (Q_1+\receives_1^*-Q_1-\bar{\receives}_1,\dots,Q_N+\receives^*_N-Q_N-\bar{\receives}_N)^T\rVert_2^2 ~~= \cr 
        &\mathbb{E}\lVert (\receives_1^*-\bar{\receives}_1,\dots,\receives^*_N-\bar{\receives}_N)^T\rVert_2^2 ~= 
         \sum_{n\in \setN} \mathbb{E}\left[(\receives_n^*-\bar{\receives}_n)^2\right] ~~= \cr
        &\sum_{n\in \setN}{\receives_n^*}^2   - 2\sum_{n\in \setN}\left(\receives_n^*\mathbb{E}\left[\bar{\receives}_n\right] \right)+ \sum_{n\in\setN}\mathbb{E}\left[\bar{\receives}_n^2 \right]
\end{aligned}
\end{equation}
We perform separate analyses for the splittable and for the unsplittable cases. 

\subsection{The Splittable Case}
\label{sec:Tidal Water Filling:subsec:Splittable}
In the splittable case, every job is sent to a server~$n$ with a probability of~$p_n$. This implies, in particular, that the random variable $\bar{\receives}_n^{(m)}$ admits a binomial distribution, that is, $\bar{\receives}_n^{(m)} \sim \Binomial{a^{(m)}}{p_n}$.
Since each decision at each dispatcher is done independently, $\set{\bar{\receives}_n^{(m)} \mid m \in \setM}$ are independent binomial variables with probability $p_n$.
Thus, $\bar{\receives}_n=\sum_{m\in\setM}\bar{\receives}_n^{(m)}$, where
$\bar{\receives}_n \sim \Binomial{\sum_{m\in\setM}a^{(m)}}{p_n} \sim \Binomial{a}{p_n}$.
Hence,
\begin{equation}\label{eq:split_moments}
    \mathbb{E}[\bar{\receives}_n] = a p_n \quad\mbox{and}\quad
    \mathbb{E}[ \bar{\receives}_n^2] = a p_n(1-p_n)+ a^2 {p_n}^2.
\end{equation}
Given $Q$ and $a$ we can rewrite \eqref{eq:l2_error_development} using \eqref{eq:split_moments} as a function of $P = \langle p_1,\ldots,p_N \rangle$:
\begin{equation}\label{eq:l2_error_split}
\begin{aligned}
        f(P) &~=~ 
        \mathbb{E}\lVert Q^*-\bar{Q}\rVert_2^2 ~=~~
        \sum_{n\in\setN}{\receives_n^*}^2   - 2a\sum_{n\in\setN}\receives_n^*p_n+ \sum_{n\in\setN}\left(a p_n - a {p_n}^2 + a^2 {p_n}^2 \right) \cr
        &~=~ \sum_{n\in\setN}{\receives_n^*}^2   - 2a\sum_{n\in\setN}\receives_n^*p_n + a - a\sum_{n\in\setN}{p_n}^2 + a^2\sum_{n\in\setN} {p_n}^2
\end{aligned}
\end{equation}
Now, to simplify the analysis, we first make the observation that for any strictly positive number of arrivals to the system (\ie $a>0$),
\begin{equation}\label{eq:split:simplified}
    \arg\min f(P) ~=~ \arg\min \, (a-1)\sum_{n\in\setN}{p_n}^2 - 2\sum_{n\in\setN} \receives_n^*p_n.
\end{equation}
We thus turn to solve the expression on the right-hand side. As can be seen from \eqref{eq:split:simplified}, for a single arrival to the system (\ie $a=1$), the solution would be to divide the probabilities arbitrarily among all shortest queues. Thus, we next assume that $a>1$.
Recall that we aim to compute a probability assignment~$P$ that optimizes $\arg\min f(P)$. In particular, we have that $\sum_{n\in\setN}p_n = 1$ and $p_n \ge 0 \,\, \forall n\in\setN$. 
The optimization problem to solve in standard form is, 

\begin{equation}\label{eq:split:objective_fuction}
\begin{aligned}
\min_{P} \quad & \tilde{f}(P) = (a-1)\sum_{n\in\setN}{p_n}^2 - 2\sum_{n\in\setN} \receives_n^*p_n\\
\textrm{s.t.} \quad & \sum_{n\in\setN} p_n - 1 = 0, \quad
 -p_n \le 0 \,\, \forall n \in \setN. 
\end{aligned}
\end{equation}
Notice that this is not a linear program, because the objective function is not linear. However, the problem is convex with affine constraints. 
To solve the problem, we employ the Karush-Kuhn-Tucker method (KKT)  \cite{karush1939minima,kuhn1951}. 
The associated Lagrangian function is
\begin{equation}
    \begin{split}
        L(P,\Lambda)& = (a-1)\sum_{n\in\setN}{p_n}^2 - 2\sum_{n\in\setN} \receives_n^*p_n - \sum_{n\in\setN} \Lambda_n p_n + \Lambda_0 (\sum_{n\in\setN} p_n - 1).
    \end{split}
\end{equation}
The respective KKT conditions are
\begin{equation}\label{eq:split:kkt_conditions}
\begin{aligned}
&\frac{\partial L}{\partial p_n} = 2(a-1) p_n - 2\receives_n^* - \Lambda_n + \Lambda_0 = 0 \quad \forall \, n\in\setN & \text{(Stationarity)} \cr
&\sum_{n\in\setN} p_n - 1 = 0 \,\, \text{and } p_n \ge 0 \quad \forall n\in\setN  &\text{(Primal feasibility)}\cr
& \Lambda_n \ge 0 \quad \forall n\in\setN  &\text{(Dual feasibility)}\cr
& p_n \Lambda_n = 0 \quad \forall n\in\setN  & \text{(Complementary slackness)}
\end{aligned}
\end{equation}
By Stationarity 
in 
\eqref{eq:split:kkt_conditions} we obtain that, for any $p_n$, 
\begin{equation}\label{eq:split:positive pi  2}
    p_n = \frac{2\receives_n^* - \Lambda_0 + \Lambda_n}{2(a-1)},
\end{equation}
and adding Complementary slackness from \eqref{eq:split:kkt_conditions} yields that for any $p_n > 0$
we have,
\begin{equation}\label{eq:split:positive pi}
    p_n = \frac{2\receives_n^* - \Lambda_0}{2(a-1)}.
\end{equation}
We can now  substitute for~$p_n$ according to \eqref{eq:split:positive pi} in our objective function  \eqref{eq:split:objective_fuction} to obtain a function of a single variable $\Lambda_0$. This yields,
\begin{equation}\label{eq:split:simple l2 error}
    \tilde{f}(P(\Lambda_0)) = (a{-}1)\sum_{p_n > 0}{\bigg(\frac{2\receives_n^* {-} \Lambda_0}{2(a{-}1)}\bigg)}^2 - 2\sum_{p_n > 0} \receives_n^*\bigg(\frac{2\receives_n^* {-} \Lambda_0}{2(a{-}1)}\bigg) = \frac{\sum_{p_n>0}\Big(\Lambda_0^2 - (2\receives_n^*)^2\Big)}{a-1}.
\end{equation}
Notice that \eqref{eq:split:positive pi} implies that for every $p_n>0$ it holds that $\Lambda_0^2 - (2\receives_n^*) < 0$. 
Hence, every term in the summation of \eqref{eq:split:simple l2 error} is negative.
As a result, lower values of $\Lambda_0$ lead to both lower values of each term and, perhaps, more negative terms.
Clearly, to minimize the objective function, we seek the smallest $\Lambda_0$ that satisfies the KKT conditions given in \eqref{eq:split:kkt_conditions}.
Observe that we can lower bound $\Lambda_0$ by combining \eqref{eq:split:positive pi 2} with the Primal feasibility in  \eqref{eq:split:kkt_conditions} to obtain:
\begin{equation}\label{eq:split:minimal l0 }
%\begin{split}
    1~=~ \sum_{n\in\setN} p_n ~=~  \sum_{n\in\setN} \frac{2\receives_n^* - \Lambda_0 + \Lambda_n}{2(a-1)} %\\ 
    ~\ge~ \sum_{\receives_n^*>0} \frac{2\receives_n^* - \Lambda_0 + \Lambda_n}{2(a-1)}.
%\end{split}
\end{equation}
Using $\sum_{n\in\setN}\receives_n^* ~=~ \sum_{\receives_n^*>0}\receives_n^* ~=~ a$, and rearranging \eqref{eq:split:minimal l0 } yields
\begin{equation}\label{eq:split:minimal l0 2}
    2a - \Lambda_0 \sum_{\receives_n^*>0} 1 ~+~ \sum_{\receives_n^*>0} \Lambda_n ~\le~ 2(a-1).
\end{equation}
Thus, due to the Dual feasibility in \eqref{eq:split:kkt_conditions}, we obtain
\begin{equation}\label{eq:split:minimal l0 3}
    \Lambda_0 \ge \frac{2 + \sum_{\receives_n^*>0} \Lambda_n}{\underWL} \ge  \frac{2}{\underWL},\quad \text{where} \quad \underWL ~\triangleq~ \sum_{\receives_n^*\,>\,0} 1.
\end{equation}
Setting $\Lambda_0 = \frac{2}{\underWL}$ and $\Lambda_n = 0$  for all $\receives_n^*>0$ respects the KKT conditions and minimizes the objective function with respect to $\Lambda_0$. Finally, substituting $\frac{2}{\underWL}$ for $\Lambda_0$ in Equation~\eqref{eq:split:positive pi}, we obtain that 
the optimal solution for $a>1$ in the splittable case is
\begin{equation}\label{eq:split:opt_sol}
    p_n ~=~\max \{0, \frac{\receives_n^*-1/\underWL}{a-1} \}~.
\end{equation}

\begin{definition}[Splittable tidal water filling]\label{def:sTWF}
Given $Q=Q(t)$ and $a=a(t)>1$, a stochastic dispatching policy $P(Q,a)$ that, in every round~$t$ sends each job to server~$n\in\setS$ with probability $p_n= \max \{0, \frac{\receives_n^*-1/\underWL}{a-1}\}$, implements tidal water filling (\sTWF) in the splittable setting. 
\end{definition}

Notice that \sTWF\/ depends only on~$a=a(t)$ and~$Q=Q(t)$. It does not depend on the full detail of~$\veca$.
In the context of complete information, $Q$ is available to the dispatcher.
However, $a$ is not.
In order to use \sTWF\/ an individual dispatcher~$m$ must replace $a$ with some estimate.
If $\mathbb{E}[a(0)]$, the expected value of~$a$, is known, it can be used. 
Similarly, if $\mathbb{E}[\serves(0)]$, the total expected completion rate of the servers is known, it may also be used to replace~$a$.
Since, by assumption, dispatcher~$m$ has access to $a^{(m)}(t)$, it can use $M a^{(m)}(t)$ for~$a(t)$. 
This has the nice property that the average of what the dispatchers use equals exactly the total arrivals at that round. That is, $\frac{1}{M}\sum_{m\in\setM} M a^{(m)}(t) = a(t)$.
We will hereafter assume that dispatcher~$m$ estimates~$a(t)$ in this manner.
In Appendix~\ref{app:stabiltiy} we show that the resulting protocol satisfies the desirable {\em strong stability property} for discrete-time queuing systems.%
\footnote{In fact, Appendix~\ref{app:stabiltiy} proves strong stability for all the policies we introduce in this paper including the policies that operate based on partial information (see \Cref{sec:Partial Information}).}
In \cref{sec:Tidal Water Filling:subsec:Example} we shall discuss the properties and the intuitive interpretation of the probabilities used in \sTWF.

%%%%%%%%%%%%%%%%%%%%%%%%%%%%%%%%%%%%%%%%%%%%%%%%%%%%%%%%%%%%%%%%%%
%%%%%%%%%%%%%%%%%%%%%%%%%%%%%%%%%%%%%%%%%%%%%%%%%%%%%%%%%%%%%%%%%%

\subsection{The Unsplittable Case}
\label{sec:Tidal Water Filling:subsec:Unsplittable}
In the unsplittable case, we again assume that every dispatcher~$m$ knows the vector~$Q(t)$ of queue sizes (complete information).
It differs from the splittable case only in that~$m$ must send all of the jobs that it receives in a given round to a single server. This affects the mathematics of the optimization problem. First of all, knowing the complete vector~$\veca$ of arrivals makes a significant difference in this case. Indeed, given~$Q$ and~$\veca$, computing an optimal job assignment to the servers essentially requires solving an instance of \textsc{Bin-Packing}. 
In \cref{sec:NP-hard} we prove its NP-hardness by a reduction from the \textsc{Partition} problem~\cite{karp1972reducibility}.
More precisely, we show the following.
\begin{restatable}{theorem}{NPhard}
\label{thm:NP-hard}
Given~$Q$ and~$\veca$, it is NP-hard to decide if $\min\mathbb{E}\lVert Q^*-\bar{Q}\rVert_2^2=0$ for a system with two servers.
\end{restatable}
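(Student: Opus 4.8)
The plan is to prove NP-hardness by a polynomial-time reduction from the \textsc{Partition} problem. Recall that an instance of \textsc{Partition} is a multiset of positive integers $\{w_1,\ldots,w_M\}$ with total sum $W = \sum_{m} w_m$, and the question is whether there exists a subset $I \subseteq \{1,\ldots,M\}$ with $\sum_{m \in I} w_m = W/2$. The natural correspondence is to identify the $M$ dispatchers with the $M$ items: I would set $a^{(m)} = w_m$ for each dispatcher~$m$, so that the total arrivals is $a = W$. In the unsplittable case each dispatcher must route its entire batch of $w_m$ jobs to one of the two servers, so a dispatching decision is precisely an assignment of each item to one of two bins. The objective $\mathbb{E}\lVert Q^* - \bar{Q}\rVert_2^2$ should then be arranged so that it equals~$0$ exactly when the two bins are perfectly balanced, i.e.\ when a valid partition exists.

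The key steps, in order, are as follows. First I would choose the initial queue vector~$Q$ for the two servers so that the water-filling target~$Q^*$ splits the $a = W$ arriving jobs evenly between the two servers; the cleanest choice is $Q = \langle 0, 0\rangle$ (or any two equal values), which makes $\receives_1^* = \receives_2^* = W/2$ by the water-level computation of \Cref{alg:WaterLevel}. Second, I would observe that in the unsplittable case the assignment is deterministic once we fix which server each dispatcher targets, so $\bar{\receives}_1$ and $\bar{\receives}_2$ are determined constants (not genuinely random) for a given pure assignment, and therefore the expectation in the objective reduces to the squared Euclidean distance $(\receives_1^* - \bar{\receives}_1)^2 + (\receives_2^* - \bar{\receives}_2)^2$. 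Third, since $\bar{\receives}_1 + \bar{\receives}_2 = W = \receives_1^* + \receives_2^*$, this distance vanishes if and only if $\bar{\receives}_1 = \receives_1^* = W/2$, which holds exactly when the subset of dispatchers routing to server~$1$ sums to $W/2$ — that is, exactly when the \textsc{Partition} instance is a yes-instance. Fourth, I would confirm that the reduction is polynomial-time (it is a trivial rescaling) and that the minimum over all policies is attained at a pure assignment, so that considering deterministic assignments loses no generality.

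The main obstacle I anticipate is the last point: justifying that the minimizing policy can be taken to be a pure (deterministic) assignment, and that randomized mixing cannot help achieve objective value~$0$. I would argue that if the minimum expected value is~$0$, then since the objective is a nonnegative random variable it must be~$0$ almost surely, which forces every realized assignment in the support of the optimal policy to be a perfect partition; hence a zero-value randomized policy exists only when a zero-value pure assignment exists. This converts the decision question ``$\min \mathbb{E}\lVert Q^* - \bar{Q}\rVert_2^2 = 0$?'' into the \textsc{Partition} decision question. A secondary point to handle carefully is the integrality and positivity of the $w_m$: \textsc{Partition} requires the sum $W$ to be even for a yes-instance to be possible, but this is already part of the problem, and the water-filling target $W/2$ need not be an integer for the argument (the paper already remarks that $Q^*_n$ may be rational), so no additional rounding complication arises.
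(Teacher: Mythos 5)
Your proposal is correct and follows essentially the same route as the paper: a reduction from \textsc{Partition} with two empty queues, $a^{(m)}$ set to the item weights, the observation that $Q^*=\langle W/2,W/2\rangle$, and the argument that a nonnegative random objective with zero expectation must vanish almost surely, so a zero-value randomized policy exists iff a perfect deterministic split exists. Your treatment of the randomized-versus-pure-assignment point is, if anything, slightly more explicit than the paper's one-line version of the same observation.
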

In general, the values $a^{(1)},a^{(2)},\ldots,a^{(M)}$ may be different from each other.
\cref{thm:NP-hard} implies that optimizing for general~$Q$ and~$\veca$ is intractable.
Instead, we optimize the unsplittable problem for the case that $a^{(1)}=a^{(2)}=\cdots=a^{(M)}$, which is consistent with the assumption that $a=Ma^{(m)}$, made in the splittable setting. 
We consider this case as a heuristic means to derive the dispatching probabilities. 
Our experiments in \Cref{sec:eval:complete} show that the resulting policy works well, even when arrivals are governed by {\em i.i.d.}\ Poisson distributions, under which the arrival values $a^{(m)}$ are rarely identical.

We now reformulate the optimization problem in the unsplittable setting for this heuristic case.
The difference from the splittable setting arises following \eqref{eq:l2_error_development} since now $\bar{\receives}_n^{(m)}$ does not admit a Binomial distribution.
Instead, we have
\begin{empheq}[left={\bar{\receives}_n^{(m)} =\empheqlbrace}]{alignat*=2}
    &a^{(m)}, &\text{w.p. } p_n, \\
    &0,       &\text{otherwise}.
\end{empheq}
Namely, a dispatcher $m$ sends all  $a^{(m)}$ of its jobs to server~$n$ with probability $p_n$. 
Therefore, the total number of jobs that server~$n$ receives is $\bar{\receives}_n=\sum_{m\in\setM}\bar{\receives}_n^{(m)}$.
Since $\set{\bar{\receives}_n^{(m)} \mid m \in \setM}$ are {\it i.i.d.}\ random variables, $\bar{\receives}_n$ has the following first and second moments,
\begin{equation}\label{eq:unsplit_moments}
\begin{aligned} 
    \mathbb{E}[\bar{\receives}_n] &= M a^{(m)} p_n, \cr
    \mathbb{E}[ \bar{\receives}_n^2] &= \sum_{m\in\setM}(a^{(m)})^2 p_n  + \hspace{-0.2cm} \sum_{\substack{m,m' \in \setM, \\ m \neq m'}}\hspace{-0.2cm} a^{(m)} \receives^{(m')} {p_n}^2 
    = M (a^{(m)})^2 p_n + M(M-1)(a^{(m)} p_n)^2.
\end{aligned}
\end{equation}
Given $Q$ and $a^{(m)}$ we rewrite \eqref{eq:l2_error_development} substituting the first and second moments according to~\eqref{eq:unsplit_moments}. This yields
\begin{equation}\label{eq:l2_error_unsplit}
\begin{aligned}
       f(P) =& \sum_{n\in\setN}{\receives_n^*}^2 - 2Ma^{(m)}\sum_{n\in\setN}\receives_n^*p_n  
       +\sum_{n\in\setN}\left(M (a^{(m)})^2 p_n {+} M(M{-}1)(a^{(m)} p_n)^2 \right) \cr
        =& \sum_{n\in\setN}{\receives_n^*}^2 - 2Ma^{(m)}\sum_{n\in\setN}\receives_n^*p_n
        +M (a^{(m)})^2\sum_{n\in\setN} p_n  {+}  (M^2{-}M)(a^{(m)})^2\sum_{n\in\setN}{p_n}^2.
\end{aligned}
\end{equation}
Recall that we aim to minimize $f(P)$ under the constraint that~$P$ is a probability assignment.
That is, $\sum_{n\in\setN}p_n = 1$ and $p_n \ge 0 \,\, \forall n\in\setN$.
Observe that, 
\begin{equation}\label{eq:unsplit:simplified}
    \arg\min f(P) = \arg\min \, (M-1)a^{(m)}\sum_{n\in\setN}{p_n}^2 - 2\sum_{n\in\setN} \receives_n^*p_n.
\end{equation}

We proceed to solve for the right hand side. 
As can be seen from \eqref{eq:unsplit:simplified}, for a single-dispatcher system (\ie $M=1$), the solution would be to arbitrarily divide the probabilities among all shortest queues, \ie JSQ.
Thus, we next assume that $M>1$. 
The optimization problem in standard form becomes 
\begin{equation}\label{eq:unsplit:standard form}
\begin{aligned}
\min_{P} \quad & \tilde{f}(P) = (M-1)a^{(m)}\sum_{n\in\setN}{p_n}^2 - 2\sum_{n\in\setN} \receives_n^*p_n\\
\textrm{s.t.} \quad & \sum_{n\in\setN} p_n - 1 = 0, \quad
 -p_n \le 0 \,\, \forall n \in \setN. 
\end{aligned}
\end{equation}
Once more, we employ the KKT method.
However, for the unsplittable case, the solution is more involved.
In particular, identifying the subset of servers $\setU\subseteq\setN$ for which positive probabilities should be assigned, poses a challenge.

The optimization problem in \eqref{eq:unsplit:standard form} is a convex problem with affine constraints.
The resulting Lagrangian function is
\begin{equation}
        L(P,\Lambda) = (M-1)a^{(m)}\sum_{n\in\setN}{p_n}^2 - 2\sum_{n\in\setN} \receives_n^*p_n 
        - \sum_{n\in\setN} \Lambda_n p_n 
        + \Lambda_0 (\sum_{n\in\setN} p_n - 1),
\end{equation}
and the respective KKT conditions are,
\begin{equation}\label{eq:unsplit:kkt_conditions}
\begin{aligned}
&\frac{\partial L}{\partial p_n} = 2(M{-}1)a^{(m)} p_n - 2\receives_n^* - \Lambda_n + \Lambda_0 = 0 \quad \forall \, n\in\setN  & \text{(Stationarity)} \cr
&\sum_{n\in\setN} p_n - 1 = 0 \,\, \text{and } p_n \ge 0 \quad \forall n\in\setN  &\text{(Primal feasibility)}\cr
& \Lambda_n \ge 0 \quad \forall n\in\setN  &\text{(Dual feasibility)}\cr
& p_n \Lambda_n = 0 \quad \forall n\in\setN  & \text{(Complementary slackness)}
\end{aligned}
\end{equation}
Using the Stationarity from \eqref{eq:unsplit:kkt_conditions} we get that for any $p_n$,
\begin{equation}\label{eq:unsplit:positive pi  2}
    p_n = \frac{2\receives_n^* - \Lambda_0 + \Lambda_n}{2(M-1)a^{(m)}},
\end{equation}
and using Complementary slackness \eqref{eq:unsplit:kkt_conditions} yields that for any $p_n > 0$
we have
\begin{equation}\label{eq:unsplit:positive pi}
    p_n = \frac{2\receives_n^* - \Lambda_0}{2(M-1)a^{(m)}}.
\end{equation}
We can use \eqref{eq:unsplit:positive pi} in our objective function in \eqref{eq:unsplit:standard form} to obtain a function of a single variable $\Lambda_0$. This yields,
\begin{equation}\label{eq:unsplit:simple l2 error}
    \tilde{f}(P(\Lambda_0)) =~ (M{-}1)a^{(m)}\sum_{p_n > 0}{\bigg(\frac{2\receives_n^* - \Lambda_0}{2(M{-}1)a^{(m)}}\bigg)}^2 
     - 2\sum_{p_n > 0} \receives_n^*\bigg(\frac{2\receives_n^* - \Lambda_0}{2(M{-}1)a^{(m)}}\bigg).
\end{equation}
Rearranging \eqref{eq:unsplit:simple l2 error} we get,
\begin{equation}\label{eq:unsplit:simpler l2 error}
    \tilde{f}(P(\Lambda_0)) = \frac{1}{(M{-}1)a^{(m)}}\sum_{p_n>0}\Big(\frac{\Lambda_0^2}{4} - {\receives_n^*}^2\Big).
\end{equation}
Similarly to the splittable case, in order to minimize the objective function, we need to find the smallest $\Lambda_0$ that respects the KKT conditions.
For this purpose we define a set of servers $\setU\subseteq \setN$ that would be strictly below the water level if the total arrivals were $(M{-}1)a^{(m)}$ jobs instead of $Ma^{(m)}$.%
\footnote{$\setU$ is not directly derived from the equations. Identifying it is partially based on an intelligent guess.}
That is, \mbox{$\setU=\{n \mid Q_n <  \wlf{Q}{(M{-}1)a^{(m)}}  \}$}, where  \textsc{WaterLevel} is given by \cref{alg:WaterLevel}.
We now use \eqref{eq:unsplit:positive pi 2} and the Primal feasibility \eqref{eq:unsplit:kkt_conditions} to obtain,
\begin{equation}\label{eq:unsplit:minimal l0 }
    1=
    \sum_{n\in\setN} p_n =
    \sum_{n\in\setN} \frac{2\receives_n^* - \Lambda_0 + \Lambda_n}{2(M{-}1)a^{(m)}} \ge
    \sum_{n\in \setU} \frac{2\receives_n^* - \Lambda_0 + \Lambda_n}{2(M{-}1)a^{(m)}}.
\end{equation}
Rearranging \eqref{eq:unsplit:minimal l0 } and using the Dual feasibility from \eqref{eq:unsplit:kkt_conditions} yields,
\begin{equation}\label{eq:unsplit:minimal l0 2}
% \begin{split}
    2(M{-}1)a^{(m)} \ge
    2 \sum_{n\in \setU} \receives_n^* - \Lambda_0 \sum_{n\in \setU} 1 + \sum_{n\in \setU} \Lambda_n 
    \ge 2 \sum_{n\in \setU} \receives_n^*  - \Lambda_0 |\setU|.
% \end{split}
\end{equation}
Thus, since $Ma^{(m)} = \sum_{n\in\setN} \receives_n^*$ and $\sum_{n\in\setN} \receives_n^* - \sum_{n\in \setU} \receives_n^* = \sum_{n\notin \setU} \receives_n^*$, we get
\begin{equation}\label{eq:unsplit:minimal l0 3}
    \Lambda_0 \ge 
    \frac{2}{|\setU|} \left(a^{(m)}-\sum_{n\notin \setU} \receives_n^*\right).
\end{equation}
Next, we prove that setting $\Lambda_0$ on the above lower bound, \ie $\Lambda_0 =  \frac{2}{|\setU|} \left(a^{(m)}-\sum_{n\notin \setU} \receives_n^*\right)$, respects the KKT conditions.
We start with the Dual feasibility condition, $\Lambda_0 \ge 0$.  
Recall that $M\ge 2$, and by definition of $\setU$ it holds that $(M{-}1)a^{(m)}\le \sum_{n\in \setU}\receives_n^* \le M a^{(m)}$, thus,
\begin{equation*}
    M a^{(m)} = \sum_{n\in\setN} \receives_n^* = \sum_{n\in \setU} \receives_n^* +  \sum_{n\notin \setU} \receives_n^*
\text{ and } 
    0\le  \sum_{n\notin \setU} \receives_n^* \le a^{(m)}.
\end{equation*}
Therefore,
\begin{equation*}
    \left(a^{(m)} - \sum_{n\notin \setU} \receives_n^*\right) \ge 0
\text{ and, } 
    \Lambda_0 =  \frac{2}{|\setU|} \left(a^{(m)}-\sum_{n\notin \setU} \receives_n^* \right) \ge 0~.
\end{equation*}
This shows Dual feasibility.

Plugging \mbox{$\Lambda_0 =  \frac{2}{|\setU|} \left(a^{(m)}-\sum_{n\notin \setU} \receives_n^* \right)$} into \eqref{eq:unsplit:positive pi} yields
\begin{equation}\label{eq:unsplittableWfProbabilities}
  p_n =\max \left\{ 0,\frac{\receives_n^* - (a^{(m)}-\sum_{n' \notin \setU} \receives_{n'}^*) / |\setU| } {(M{-}1)a^{(m)}} \right\}~.  
\end{equation}
It only remains to show that these $p_n$'s also satisfy the Primal feasibility condition from~\eqref{eq:unsplit:kkt_conditions}.
We first show that for each \mbox{$n\in \setU$} it holds that $p_n>0$, whereas for $n\notin \setU$ it holds that~$p_n = 0$.
\begin{lemma}\label{lem:unsplittable:U}
    For each $n\in \setN$, it holds that $p_n>0$ if and only if $n\in \setU$.
\end{lemma}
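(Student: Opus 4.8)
The plan is to reduce the entire statement to a comparison of two water levels. Write $\ell \triangleq \wlf{Q}{(M{-}1)a^{(m)}}$ and $L \triangleq \wlf{Q}{Ma^{(m)}}$, so that by definition of the water-filling solution $\setU = \set{n : Q_n < \ell}$ while $\receives_n^* = \max\set{0,\, L - Q_n}$ for every $n$. Since $a^{(m)}\ge 0$ gives $(M{-}1)a^{(m)} \le Ma^{(m)}$, pouring less water cannot raise the level, so $\ell \le L$; in particular every $n\in\setU$ satisfies $Q_n < \ell \le L$ and hence $\receives_n^* = L - Q_n > 0$. I would first record the two conservation identities built into \cref{alg:WaterLevel}: the servers strictly below a level are exactly those that receive water and are raised up to that level, giving $\sum_{n\in\setU}(\ell - Q_n) = (M{-}1)a^{(m)}$, and, as already used in the derivation, $\sum_{n\in\setN}\receives_n^* = Ma^{(m)}$.

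The crux of the argument is to evaluate the threshold appearing in the numerator of $p_n$. Because the denominator $(M{-}1)a^{(m)}$ is positive (we are in the case $M>1$, and $a^{(m)}>0$), the claim is exactly the equivalence $\receives_n^* > \theta \iff n\in\setU$, where $\theta \triangleq \bigl(a^{(m)} - \sum_{n'\notin\setU}\receives_{n'}^*\bigr)/|\setU|$. Using $\sum_{n\in\setN}\receives_n^* = Ma^{(m)}$ to write $a^{(m)} - \sum_{n\notin\setU}\receives_n^* = \sum_{n\in\setU}\receives_n^* - (M{-}1)a^{(m)}$, and substituting $\receives_n^* = L - Q_n$ together with the first conservation identity in the form $\sum_{n\in\setU}Q_n = |\setU|\ell - (M{-}1)a^{(m)}$, I expect to obtain $\sum_{n\in\setU}\receives_n^* = |\setU|(L-\ell) + (M{-}1)a^{(m)}$. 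This collapses the numerator to $a^{(m)} - \sum_{n\notin\setU}\receives_n^* = |\setU|(L-\ell)$, so that $\theta = L - \ell$.

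With $\theta = L - \ell$ in hand the case analysis is immediate. For $n\in\setU$ we have $\receives_n^* - \theta = (L - Q_n) - (L-\ell) = \ell - Q_n > 0$, whence $p_n > 0$. For $n\notin\setU$ we have $Q_n \ge \ell$: if $Q_n \ge L$ then $\receives_n^* = 0 \le \theta$, while if $\ell \le Q_n < L$ then $\receives_n^* - \theta = \ell - Q_n \le 0$; in either subcase $p_n = 0$. This establishes both directions of the equivalence, and the derivation in \cref{sec:Derivation of the Unsplittable Optimal SCD Policy} can then conclude that $\sum_{n\in\setN}p_n = \sum_{n\in\setU}p_n = 1$, verifying Primal feasibility.

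The only nonroutine step is the evaluation $\theta = L - \ell$, which is where I expect the real work to lie: it requires pairing the two water-conservation identities correctly through $\setU$ and relies on the monotonicity $\ell \le L$ to guarantee $\receives_n^* = L - Q_n$ (rather than $0$) for every $n\in\setU$. A secondary point demanding care is the bookkeeping of strict versus weak inequalities, since the definition of $\setU$ uses $Q_n < \ell$ and the whole statement turns on distinguishing $p_n > 0$ from $p_n = 0$ exactly at the boundary $Q_n = \ell$.
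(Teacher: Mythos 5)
Your proof is correct and follows essentially the same route as the paper's: both arguments hinge on the identity that the threshold $(a^{(m)}-\sum_{n'\notin\setU}\receives_{n'}^*)/|\setU|$ equals the difference $\wlf{Q}{Ma^{(m)}}-\wlf{Q}{(M{-}1)a^{(m)}}$ of the two water levels, followed by the same case analysis on whether $Q_n$ lies below $\ell$, between $\ell$ and $L$, or at or above $L$. The only difference is that you explicitly derive this identity from the two water-conservation equalities, whereas the paper asserts it directly; your version is the more complete one.
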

\begin{proof}
    Let $b_n^* = \max \{ 0,\wlf{Q}{(M{-}1)a^{(m)}} {-} Q_n\}$.
    By definition of~$\setU$, it holds that $b_n^*>0$ if and only if $n\in \setU$.
    Note that $Q_n + b_n^*$ is exactly $\wlf{Q}{(M{-}1)a^{(m)}}$ for $n\in \setU$ and is simply $Q_n$ for $n\notin \setU$.
    Thus, for each $n\in \setU$ we obtain $\receives_n^* = b_n^* + x$ where
    \begin{equation*}
    \begin{split}
    x = &\wlf{Q}{Ma^{(m)}} 
         - \wlf{Q}{(M{-}1)a^{(m)}} 
    =  \frac{(a^{(m)}-\sum_{n' \notin \setU} \receives_{n'}^*)}{|\setU|},
    \end{split}
    \end{equation*}
    and thus $p_n > 0$ if $n\in \setU$.
    
    On the other hand, if $n \notin \setU$ we obtain $\receives_n^* = b_n^* + x_n$ where $ b_n^*=0$ and $x_n \le x$ since,
    $$x_n = \max \{0, \frac{\wlf{Q}{M a^{(m)}}-Q_n}{|\setU|}\},$$
    and,
    $$Q_n \ge \wlf{Q}{(M{-}1)a^{(m)}}.$$
    This yields
    $$\receives_n^* = x_n \le x = \frac{(a^{(m)}-\sum_{n' \notin \setU} \receives_{n'}^*)}{|\setU|}, $$ and thus $p_n = 0$ if $n \notin \setU$. This concludes the proof.
\end{proof}
According to \cref{lem:unsplittable:U} we have that
\begin{equation*}
\begin{split}
     \sum_{n\in\setN} p_n = & \sum_{n\in \setU} p_n 
     = \sum_{n\in \setU} \left( \frac{\receives_n^* - (a^{(m)}-\sum_{n' \notin \setU} \receives_{n'}^*) / |\setU|} {(M{-}1)a^{(m)}} \right)
    \cr
    = & \frac{1}{(M{-}1)a^{(m)}} \left(
    \sum_{n\in \setU}  \receives_n^* 
    - \sum_{n\in \setU} \left(  (a^{(m)}-\sum_{n' \notin \setU} \receives_{n'}^*) / |\setU|  \right)
    \right) \cr
     = & \frac{1}{(M{-}1)a^{(m)}} \left(
    \sum_{n\in \setU}  \receives_n^* 
    - |\setU| \frac{a^{(m)} - \sum_{n' \notin \setU} \receives_{n'}^* }{|\setU|}
    \right) \cr
    = & \frac{\sum_{n\in\setN} \receives_n^* - a^{(m)}}{(M{-}1)a^{(m)}}
    = \frac{M a^{(m)} - a^{(m)}}{(M{-}1)a^{(m)}} = 1,
\end{split}
\end{equation*}
and all of the KKT conditions are satisfied.\\

The solution for the optimization problem in \eqref{eq:unsplit:standard form} yields the following notion of tidal water filling for the unsplittable case:
\begin{definition}[Unsplittable tidal water filling]\label{def:uTWF}
Given $Q=Q(t)$ and $a^{(m)}=a^{(m)}(t)$, let $\setU=\{n \mid Q_n <  \wlf{Q}{(M{-}1)a^{(m)}}  \}$.
An individual stochastic dispatching policy $P(Q,a^{(m)})$ for dispatcher~$m$ that, in every round~$t$ sends all its jobs to server~$n\in\setS$ with probability 
\begin{equation}\label{eq:unsplit:opt_sol}
  p_n =\max \left\{ 0,\frac{\receives_n^* - (a^{(m)}-\sum_{n' \notin \setU} \receives_{n'}^*) / |\setU| } {(M{-}1)a^{(m)}} \right\}  
\end{equation}

 \noindent
 implements tidal water filling (u\TWLs) in the unsplittable setting. 
\end{definition}
Observe that, when at most one job arrives at each dispatcher, \ie~$a^{(m)}\le 1$ for all~$m\in\setM$, there is no difference between the splittable and unsplittable problems. Indeed, in this case  Definition~\ref{def:uTWF} and Definition~\ref{def:sTWF} coincide.

%%%%%%%%%%%%%%%%%%%%%%%%%%%%%%%%%%%%%%%%%%%%%%%%%%%%%%%%%%%%%%%%%%%%%%%%%%%%%%%%%%%%%%%%%%%%%%%%%%%
%%%%%%%%%%%%%%%%%%%%%%%%%%%%%%%%%%%%%%%%%%%%%%%%%%%%%%%%%%%%%%%%%%%%%%%%%%%%%%%%%%%%%%%%%%%%%%%%%%%
\subsection{\TWLs{} vs Water Filling in Expectation}
\label{sec:Tidal Water Filling:subsec:Example}
Recall that we aim to approximate water filling (\ie $Q^*$). 
Consider the policy by which every dispatcher sends a job to each server~$n$ with a probability proportional to the amount of ``water'' it would receive in the pure water-filling solution. 
More formally, we define the Water Filling in Expectation policy (\WFiE) to assign probabilities $P(Q,a)=\langle p_1,\ldots,p_N \rangle$, where for every $n$ we have 
\begin{equation}\label{eq:WFiE}
   p_n = \frac{\receives_n^* } {a}~.
\end{equation}
The expected length of each server~$n$'s queue under \WFiE{} is precisely~$Q^*_n$.
Tidal water filling, however, does not produce the pure water-filling solution in expectation. 
Nevertheless, the math does not lie, and \TWLs{} improves on \WFiE.
We use the following two examples to demonstrate that \WFiE{} is suboptimal, and to provide intuition for why \TWLs{} is better. 

\T{Example 1.}
 \Cref{fig:TWFexample1} depicts a system with $N=2$ servers. At the beginning of the round, server~$n_1$ has a single job in its queue, and server~$n_2$ is idle (its queue is empty). 
There are~$M=2$ dispatchers, each of which receives a single job to dispatch. 
In this scenario, a dispatcher that uses JSQ will send its job to~$n_2$; a dispatcher that uses \WFiE{} will send its job to~$n_1$ with probability~$p_{n_1}=1/4$ and to~$n_2$ with probability~$p_{n_2}=3/4$; a dispatcher that uses \TWLs{} (since each dispatcher has a single job to dispatch in this scenario, \sTWF{} and \uTWF{} coincide) will send its job to~$n_2$ with probability~$p_{n_2}=1$.
The resulting {\em expected lengths} of the queues are depicted in~\Cref{fig:TWFexample:jsq1} for JSQ, in~\Cref{fig:TWFexample:wf1} for \WFiE, and in~\Cref{fig:TWFexample:jsq1} for \TWLs.
Observe that both JSQ and \TWLs{} guarantee the favorable solution in which the longest queue has size~2, while in \WFiE{} there is a non-negligible probability of $1/16$ that both jobs will be forwarded to~$n_1$, creating a queue of size~3.

%%%%%%%%%%%%%%%%%%%%%%%%%%%%%%%%%%%%%%%%%%%%%%%%%%

\begin{figure}[t]
\centering
\begin{subfigure}{.2\textwidth}
  \centering
  \includegraphics[width=0.8\linewidth]{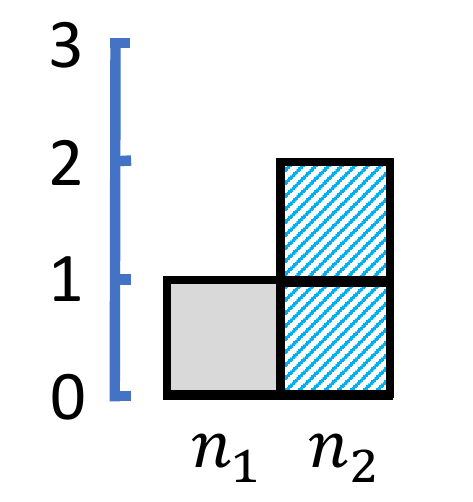}  
  \caption{\footnotesize JSQ.}
  \label{fig:TWFexample:jsq1}
\end{subfigure}
\quad\quad\quad\quad
\begin{subfigure}{.2\textwidth}
  \centering
  \includegraphics[width=0.8\linewidth]{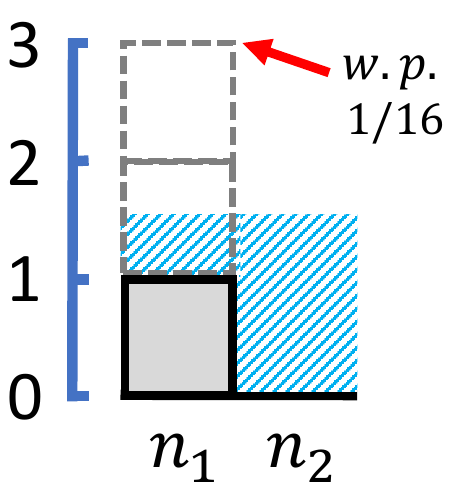}  
  \caption{\footnotesize WFiE.}
  \label{fig:TWFexample:wf1}
\end{subfigure}
\quad\quad\quad\quad
\begin{subfigure}{.2\textwidth}
  \centering
  \includegraphics[width=0.8\linewidth]{waterFillingExample/ex1_jsq_TWF}  
  \caption{\footnotesize \TWLs.}
  \label{fig:TWFexample:twf1}
\end{subfigure}
\vspace{-2mm}
\caption{A scenario with 2 dispatchers each of which receives a single job. Illustrating the expected arrivals at each queue for JSQ, WFiE and \TWLs{} (\Cref{fig:TWFexample:jsq1,fig:TWFexample:wf1,fig:TWFexample:twf1} respectively).}
% \vspace{-2mm}
\label{fig:TWFexample1}
\end{figure}

%%%%%%%%%%%%%%%%%%%%%%%%%%%%%%%%%%%%%%%%%%%%%%%%%%

\begin{figure}[t]
\centering
\begin{subfigure}{.2\textwidth}
  \centering
  \includegraphics[width=0.8\linewidth]{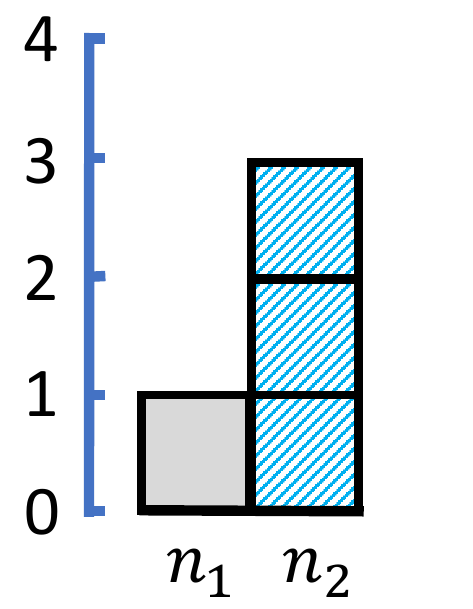}  
  \caption{\footnotesize JSQ.}
  \label{fig:TWFexample:jsq2}
\end{subfigure}
\quad\quad\quad\quad
\begin{subfigure}{.2\textwidth}
  \centering
  \includegraphics[width=0.8\linewidth]{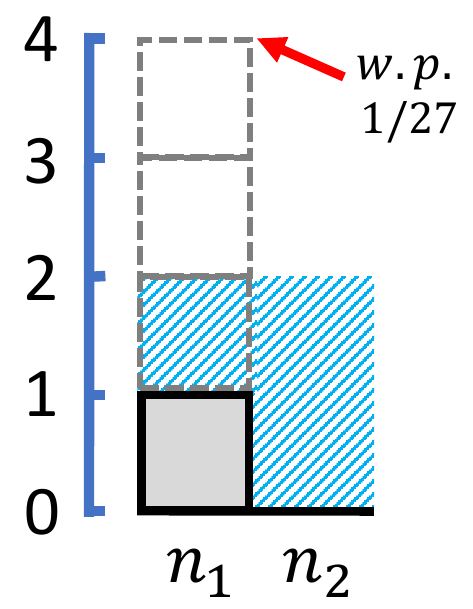}  
  \caption{\footnotesize WFiE.}
  \label{fig:TWFexample:wf2}
\end{subfigure}
\quad\quad\quad\quad
\begin{subfigure}{.2\textwidth}
  \centering
  \includegraphics[width=0.8\linewidth]{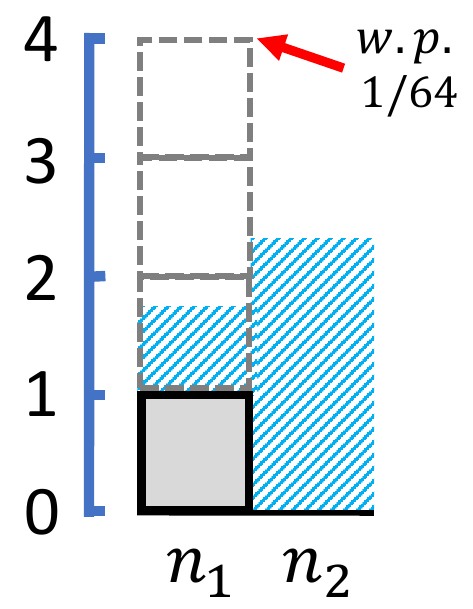}  
  \caption{\footnotesize \TWLs.}
  \label{fig:TWFexample:twf2}
\end{subfigure}
\vspace{-2mm}
\caption{A scenario with 3 dispatchers each of which receives a single job. Illustrating the expected arrivals at each queue for JSQ, WFiE and \TWLs{} (\Cref{fig:TWFexample:jsq2,fig:TWFexample:twf2,fig:TWFexample:wf2} respectively).}
\vspace{-3mm}
\label{fig:TWFexample2}
\end{figure}

%%%%%%%%%%%%%%%%%%%%%%%%%%%%%%%%%%%%%%%%%%%%%%%%%%

\T{Example 2.}
\Cref{fig:TWFexample2} illustrates a system with the same two servers and the same initial state, but with $M=3$ dispatchers. 
Each of the three dispatchers receives a single job to dispatch. 
In this scenario, a dispatcher that uses JSQ will again send its job to~$n_2$; a dispatcher that uses \WFiE{} will send its job to~$n_1$ with probability~$p_{n_1}=1/3$ and to~$n_2$ with probability~$p_{n_2}=2/3$; a dispatcher that uses \TWLs{} will send its job to~$n_1$ with probability~$p_{n_1}=1/4$ and to~$n_2$ with probability~$p_{n_2}=3/4$.
The resulting {\em expected lengths} of the queues are depicted in~\Cref{fig:TWFexample:jsq1} for JSQ, in \Cref{fig:TWFexample:wf1} for \WFiE, and in \Cref{fig:TWFexample:jsq1} for \TWLs.
In this case, JSQ results in herding towards $n_2$, creating a queue of size~3 there.
\WFiE{} results is a probability of~$1/27$ for ending with 4 jobs queuing at~$n_1$, while~$n_2$ remains idle.
Observe that \TWLs{} reduces the probability of this worst-case allocation from~$1/27$ to~$1/64$.
This is precisely the advantage that \TWLs{} provides over \WFiE{} in general. We note that  \TWLs{} also provides, with high probability, a favorable allocation in comparison to JSQ.
In the second example, for instance, JSQ will produce a better outcome than \TWLs{} with probability $1.56\%=1/64$, while \TWLs{} will be better than JSQ with probability $42.2\%=27/64$. 

Roughly speaking, our \TWLs{} policies have a greater bias towards short queues than \WFiE{} does. As illustrated by the examples this, in turn, reduces the probability that queues will grow excessively long, and reduces the probability for short queues to become idle.

%%%%%%%%%%%%%%%%%%%%%%%%%%%%%%%%%%%%%%%%%%%%%%%%%%%%%%%%%%%%%%%%%%
%%%%%%%%%%%%%%%%% complete state %%%%%%%%%%%%%%%%%%%%%%%%%%%%%%%%%
%%%%%%%%%%%%%%%%%%%%%%%%%%%%%%%%%%%%%%%%%%%%%%%%%%%%%%%%%%%%%%%%%%

\begin{figure}[t]
    \centering
    \begin{subfigure}{\textwidth}
      \centering
      \caption{Splittable dispatching.}
    \includegraphics[width=0.95\linewidth]{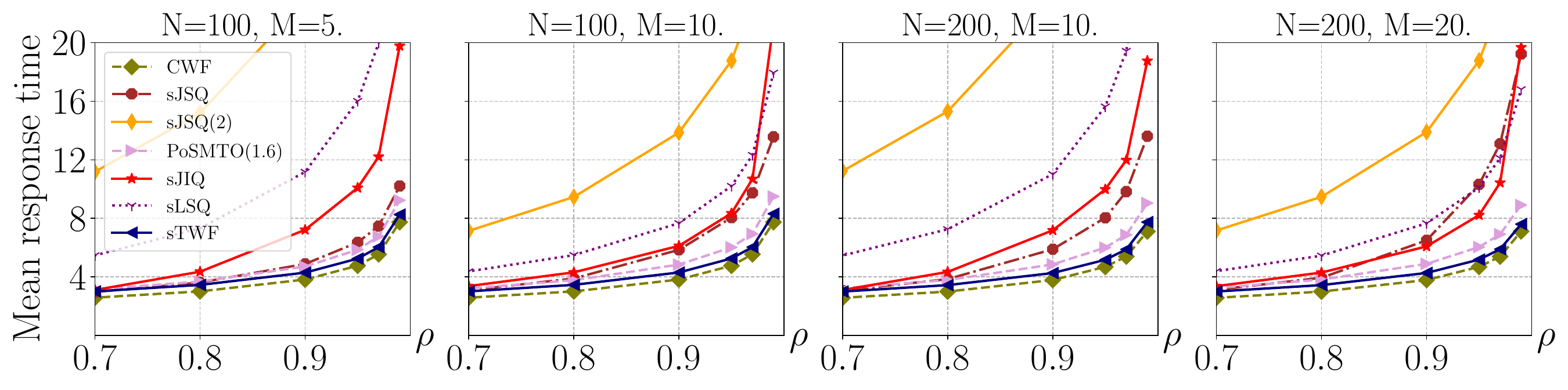}
     \label{fig:evaluation:complete:loadsweep:split}
    \end{subfigure}
    \begin{subfigure}{\textwidth}
      \centering
      \caption{Unsplittable dispatching.}
    \includegraphics[width=0.95\linewidth]{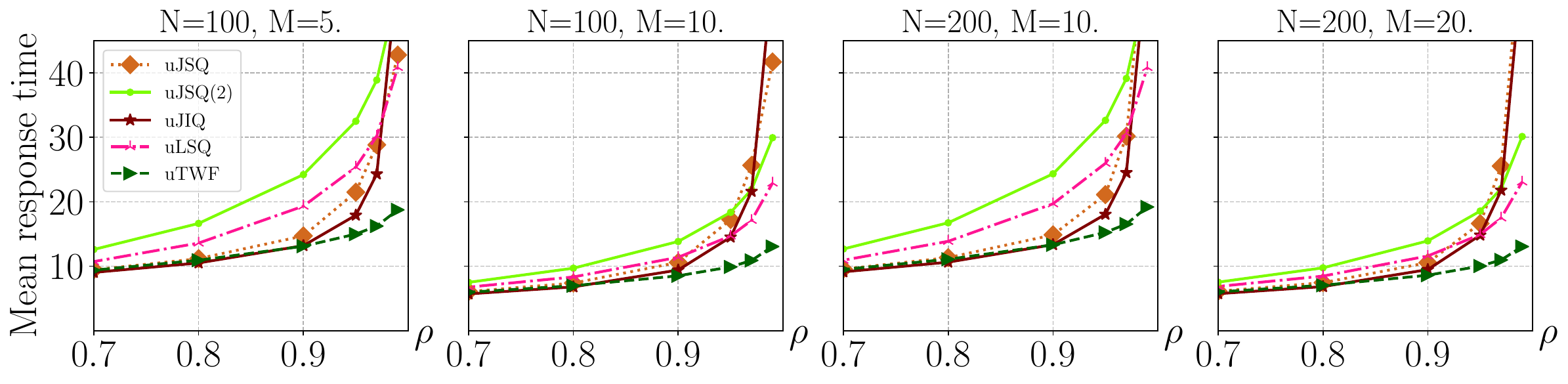}  
     \label{fig:evaluation:complete:loadsweep:unsplit}
    \end{subfigure}
    \caption{Average job response time as a function of the load over four different systems.
     The $x$-axis represents load $\rho$. The $y$-axis represents the average response time.}
    \label{fig:evaluation:complete:loadsweep}
\end{figure}

\subsection{Evaluation}\label{sec:eval:complete}

We conducted an empirical study of our \TWLs{} policies via simulations. 
In all of the simulations, at round~$t$ a dispatcher~$m$ has access only to~$a^{(m)}(t)$. Recall that in both \sTWF\ and \uTWF\ dispatcher~$m$ uses~$M\!\cdot\! a^{(m)}(t)$ as an estimate for $a(t)$.

\T{Arrivals and departures.} Our modeling of the arrival and departure processes follows standard practice (\eg \cite{vargaftik2020lsq,lu2011join,mitzenmacher2002load,van2019hyper,anselmi2020power}).
In each round, we set $a^{(m)}(t) \sim Poisson(\lambda)$ for each dispatcher $m \in \setM$, and $s^{(n)}(t) \sim Geometric(\mu)$ for each server $n \in \setN$.  Therefore, the load on the system is $\rho \triangleq M \lambda / (N \frac{\mu}{1-\mu})$.

\T{Dispatching policies.} We compare our policies to JSQ, the Power-of-two-choices denoted by JSQ(2), JIQ and the recently proposed LSQ.%
\footnote{Specifically, we use LSQ-Sample(2). See~\cite{vargaftik2020lsq} for details.}
For the splittable case, we also compare against: (1) ``the power of slightly more than a single choice" (PoSMTO($d$)) proposed in \cite{ying2017power}\footnote{We calibrated the parameter $d$ using the guidelines in \cite{ying2017power} and found $d=1.6$ as the sweet spot. Namely, the number of samples a dispatcher performs each round is not constant but equals to the size of the arrived batch of jobs multiplied by 1.6.}; (2) a centralized JSQ policy (i.e., all arrivals go through a single dispatcher, which results in a centralized water-filling (CWF) effect). The comparison against this policy can be seen as the ``price of districution''.
We use a prefix of $s$ and $u$ to denote the splittable and the unsplittable case. For the splittable case, similarly to s\TWLs, other policies are also allowed to split jobs for parallel processing.
Namely, in splittable JSQ (sJSQ), each dispatcher performs water-filling considering only its own jobs (i.e., disregarding possible arrivals of jobs to the same servers from other dispatchers). Similarly, this is the case for other policies. For example, splittable JIQ (sJIQ) splits the jobs among the idle queues equally (with random tie breaks). If there are no idle queues, each job is sent to a randomly selected server.

\T{Setup.} We consider systems with different numbers of servers~$N$ and dispatchers~$M$.
For each system, we run a set of simulations. Each simulation is configured with a different load and lasts for $10^5$ time slots (rounds).

\T{Results.}
\Cref{fig:evaluation:complete:loadsweep} shows the mean job response time ($y$-axis) across the different loads ($x$-axis) for different systems.
It is notable that \sTWF\/ outperforms all other policies in the splittable case across all systems, and \uTWF\/ does the same in the unsplittable case. 
Moreover, as the load increases the gap between the TWF policies and the rest grows.

As mentioned in~\cref{sec:Intro} and in~\cite{dean2013tail,nishtala2017hipster,lu2011join,schurman2009user}, tail distributions play a crucial role in the parallel-server setting. We proceed to  measure the tail distributions under various system parameters  under the challenging scenario of a high load of  ($\rho=0.99$).
This is depicted in \Cref{fig:evaluation:complete:ccdf} using the complementary cumulative distribution function (CCDF), which shows for each response time ($x$-axis, denoted by $\tau$), what is the fraction of jobs that surpass it ($y$-axis). 
In summary, in the complete information case, our simulations show that the TWF policies consistently outperform the previous approaches when the load is high. Further, TWF has a considerably lower ``price of distribution'', in the splittable case, compared to other policies as evident by the comparison to CWF.

\T{Additional results.} Both the distributions that are chosen to model the arrivals processes and the size of the system clearly impact the results. Accordingly, we have also conducted simulations with a heavy tail distribution (Log-normal) and a larger system (500 servers). For these simulations, detailed in Appendices \ref{app:htarrivalseval} and \ref{app:largersystemseval}, we observe that the gap is still significantly in favor of TWF.

%%% load sweep

\begin{figure}[t]
    \centering
    \begin{subfigure}{\textwidth}
      \centering
      \caption{Splittable dispatching.}
    \includegraphics[width=0.95\linewidth]{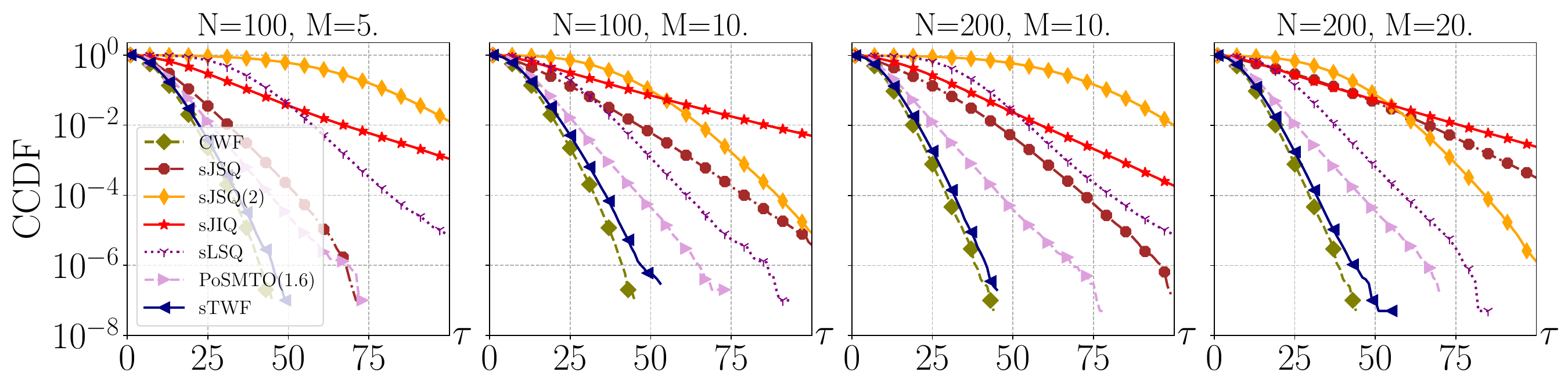}  
     \label{fig:evaluation:complete:ccdf:split}
    \end{subfigure}
    \begin{subfigure}{\textwidth}
      \centering
      \caption{Unsplittable dispatching.}
    \includegraphics[width=0.95\linewidth]{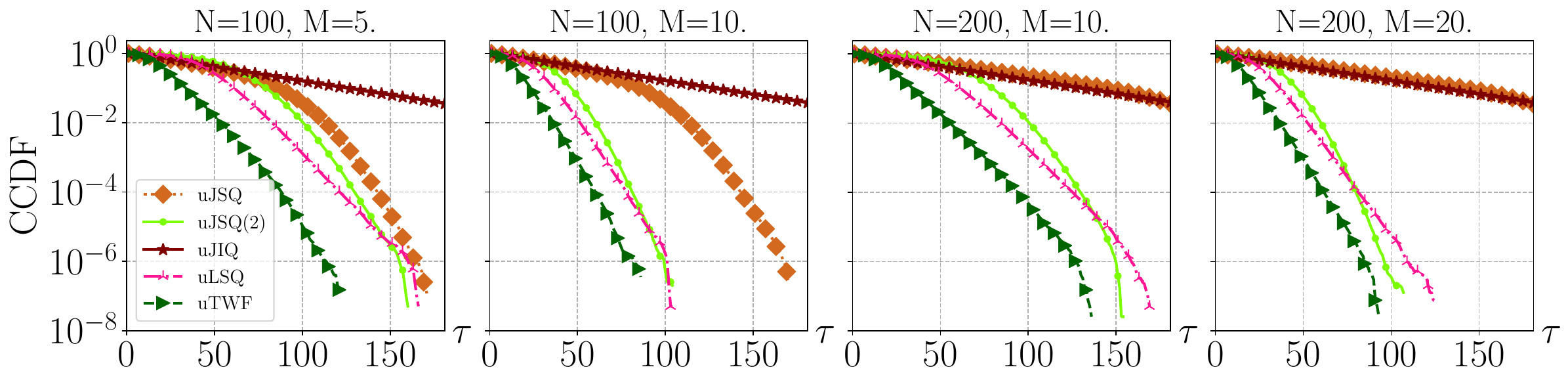}  
     \label{fig:evaluation:complete:ccdf:unsplit}
    \end{subfigure}
    \caption{Response-time tail distribution for four different systems at high load ($\rho=0.99$). The $x$-axis represents the response time (denoted by $\tau$). The $y$-axis represents the CCDF.}
    \label{fig:evaluation:complete:ccdf}
\end{figure}

%%%%%%%%%%%%%%%%%%%%%%%%%%%%%%%%%%%%%%%%%%%%%%%%%%%%%%%%%%%%%%%%%%
%%%%%%%%%%%%%%%%%%%%%%%%%%%%%%%%%%%%%%%%%%%%%%%%%%%%%%%%%%%%%%%%%%

% \begin{figure}[t]
%     \centering
%     \includegraphics[width=0.4\linewidth]{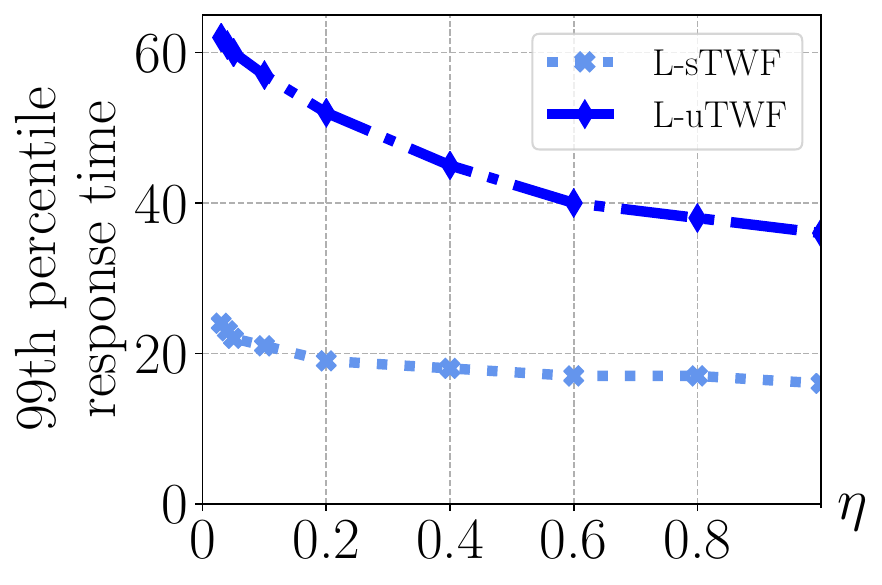}
%     \caption{Monotonic performance as a function of the communication budget. Evaluation at high load ($\rho=0.99$).}
%      \label{fig:evaluation:monotomicity}
% \end{figure}

%%%%%%%%%%%%%%%%%%%%%%%%%%%%%%%%%%%%%%%%%%%%%%%%%%%%%%%%%%%%%%%%%%
%%%%%%%%%%%%%%%%%%%%%%%%%%%%%%%%%%%%%%%%%%%%%%%%%%%%%%%%%%%%%%%%%%

\section{Enhancing Performance for Partial Information}
\label{sec:Partial Information}
In this section we relax the requirement that dispatchers have complete and accurate information regarding~$Q$.
That is, we now consider situations in which a dispatcher does not know the exact state of all servers.
In line with recent work \cite{vargaftik2020lsq,zhou2020asymptotically,anselmi2020power,van2019hyper}, we consider a system where each dispatcher keeps a local array representing the servers' queue lengths, which may contain inaccurate (e.g., outdated) information.
\begin{wrapfigure}{L}{0.4\textwidth}
\centering
\includegraphics[width=0.4\textwidth]{Evaluation/IncompleteInformation/unsplittable_monotonic}
\caption{Response times as a function of~$\eta$ at a high load \mbox{($\rho=0.99$)} for a system with $(N,M)=(100,10)$.}
\label{fig:evaluation:monotomicity}
\end{wrapfigure}
Communication is used to update array entries in the following manner:
At the end of each round (\ie in the fourth phase), a dispatcher establishes communication links with a fraction $\eta\le 1$ of the servers, which are chosen uniformly at random. 
The corresponding entries in the dispatcher's local state are then updated with these servers' queue lengths.
Additionally, a dispatcher that establishes a communication link to send jobs to server~$n$ during round~$t$ learns $Q_n(t)$.
Notice that $\eta=1$ corresponds to the complete information assumption; we use $\eta$ as a parameter designating the extent of partial information available to the dispatchers. 
A dispatcher that uses \uTWF{} based on its local array is said to implement \emph{Local} \uTWF{} (L-\uTWF).  
\emph{Local} \sTWF{} is defined in the same manner.
\Cref{fig:evaluation:monotomicity} illustrates the simulations results. It shows that the response time improves monotonically as~$\eta$ increases. 

This motivates us to increase the available information to the dispatchers.
We attempt to do so without increasing the number of links that a dispatcher establishes.
This is of interest since in many system the cost of communication lies mainly in the connection establishment rather than in the size of its content~\cite{murray2012large,moon2020acceltcp}.
To that end, we keep track of queue-size information at the servers, in a local array of size~$N$. 
A server updates its local array based on its own queue length and information that it receives from dispatchers with which it has connections.
Whenever a communication link between a dispatcher and a server is established, they merge their arrays.
This is obtained by assigning time stamps to array entries, and maintaining the most recent information per entry upon the merge (cf.~\cite{Lamport78}).%
\footnote{To the best of our knowledge, maintaining queue size information in this manner at the servers has not been done before in the parallel server model.}

To test the effectiveness of the above scheme, we conduct an experiment comparing our protocols with the state-of-the-art LSQ{-}Sample of \cite{vargaftik2020lsq,zhou2020asymptotically} for different values of~$\eta$, i.e., LSQ{-}Sample($\eta N$). 
We denote by \uTWF$^{ts}$ the unsplittable policy from~\Cref{def:uTWF} based on local arrays at both dispatchers and servers with time stamps.
\Cref{fig:evaluation:advancedCom} shows how the performance of \uTWF$^{ts}$ improves on that of L-\uTWF\/ for given values of~$\eta$. 
The figure also illustrates that even the simpler L-\uTWF{} policy is competitive with LSQ already at $\eta=0.1$.
As~$\eta$ grows (and with it the queue information improves), our protocols perform better, while LSQ's performance degrades due to increased herding.

\begin{figure}[h]
    \centering
    \includegraphics[width=0.95\linewidth]{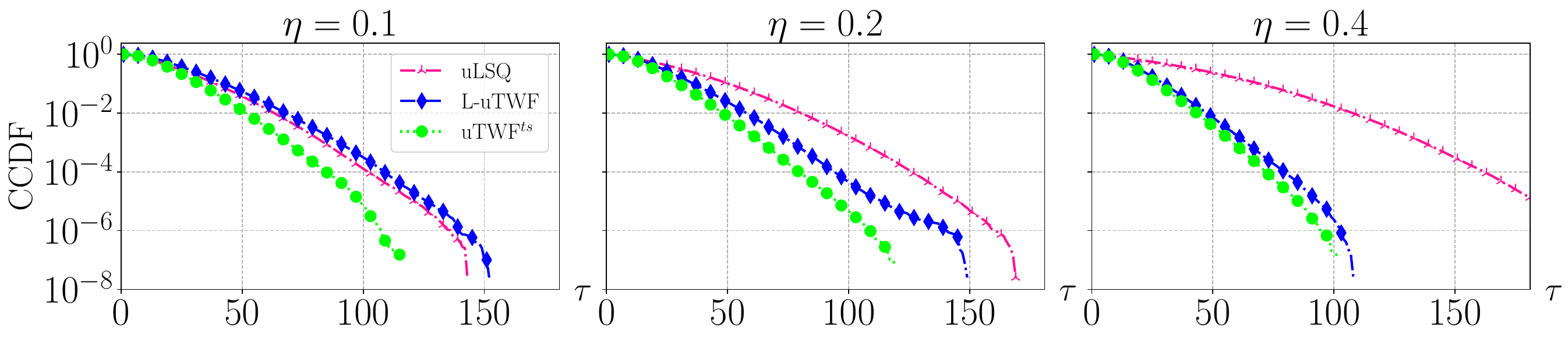}
    \caption{The effect of a distributed communication protocol on performance in a system with $(N,M)=(200,20)$. Measuring the response time tail distribution (\ie CCDF) at high load ($\rho=0.99$).}
     \label{fig:evaluation:advancedCom}
    %  \vspace{-4mm}
\end{figure}

%%%%%%%%%%%%%%%%%%%%%%%%%%%%%%%%%%%%%%%%%%%%%%%%%%%%%%%%%%%%%%%%%%
%%%%%%%%%%%%%%%%%%%%%%%%%%%%%%%%%%%%%%%%%%%%%%%%%%%%%%%%%%%%%%%%%%

\section{Discussion}
\label{sec:Disc}
This work has demonstrated that, contrary to popular belief, queue-size information can be judiciously used to improve the quality of load balancing.  
In particular, we provided new policies that avoid herding and outperform all previous solutions for the case of complete~information.

We have made a step forward in understanding the power of stochastic coordination in the load balancing arena. 
There are many additional aspects of stochastic coordination that should be explored.
Could the same conceptual design be used more generally, i.e., by loosening the homogeneity or the connectivity assumptions?
In the partial information setting we note that when queue size information is sparse, the \TWLs's advantages do not come into play, and its performance is not better and may be even worse than that of previous load-balancing policies.

Moreover, for many systems (e.g., wireless, sensors, peer-2-peer, etc.), it may be the case that obtaining complete or even partial information is too prohibitive, and one has to rely on sparse communication. For such a case, policies such as $JSQ(d)$ and $LSQ$ may offer better performance since, unlike TWF, they can operate in such settings (e.g., two samples per round). 

One can view load balancing in the multi-dispatcher parallel server model as a natural question to explore using distributed systems tools and techniques. 
Our analysis in \Cref{sec:Partial Information}, for example, made use of time-stamping and flooding to improve the load balancing performance when information is partial.
We state as an open problem how the advantages of previous approaches can be combined with those of \TWLs{} to obtain a policy that would make the best use of information across the whole spectrum of possibilities. 

In another vein, it would be interesting to investigate how information about the distributions governing a multi-dispatcher systems can be obtained, and how they can be used to improve load-balancing performance. 
Can they provide good estimates for the \TWLs{} policies, and if so, how much benefit can they bring? 
Much is clearly left to be done.

%%%%%%%%%%%%%%%%%%%%%%%%%%%%%%%%%%%%%%%%%%%%%%%%%%%%%%%%%%%%%%%%%%
%%%%%%%%%%%%%%%%%%%%%%%%%%%%%%%%%%%%%%%%%%%%%%%%%%%%%%%%%%%%%%%%%%

%%%%%%%%%%%%%%%%%%%%%%%%%%%%%%%%%%%%%%%%%%%%%%%%%%%%%%%%%%%%%%%%%%
%%%%%%%%%%%%%%%%%%%%%%%%%%%%%%%%%%%%%%%%%%%%%%%%%%%%%%%%%%%%%%%%%%

%%%%%%%%%%%%%%%%%%%%%%%%%%%%%%%%%%%%%%%%%%%%%%%%%%%%%%%%%%%%%%%%%%
%%%%%%%%%%%%%%%%%%%%%%%%%%%%%%%%%%%%%%%%%%%%%%%%%%%%%%%%%%%%%%%%%%

%%%%%%%%%%%%%%%%%%%%%%%%%%%%%%%%%%%%%%%%%%%%%%%%%%%%%%%%%%%%%%%%%%
%%%%%%%%%%%%%%%%%%%%%%%%%%%%%%%%%%%%%%%%%%%%%%%%%%%%%%%%%%%%%%%%%%

%%%%%%%%%%%%%%%%%%%%%%%%%%%%%%%%%%%%%%%%%%%%%%%%%%%%%%%%%%%%%%%%%%
%%%%%%%%%%%%%%%%%%%%%%%%%%%%%%%%%%%%%%%%%%%%%%%%%%%%%%%%%%%%%%%%%%

%%%%%%%%%%%%
%%%%%%%%%%%%
%%%%%%%%%%%%

\bibliography{references}

%%%%%%%%%%%%

\appendix

%%%%%%%%%%%%
\section{NP-Hardness of the Unsplittable Case}
\label{sec:NP-hard}
The unsplittable instance of the optimization problem is not computationally tractable.
It can be seen as a general variant of the known \textsc{Bin-Packing} problem. We prove by reduction from the classic \textsc{Partition} problem \cite{karp1972reducibility} that it entails an NP-hard problem. More precisely, we show the following.
\NPhard*
% \begin{theorem}\label{thm:NP-hard}
%     It is NP-hard to decide if $\min\mathbb{E}\lVert Q^*-\bar{Q}\rVert_2^2=0$ for a system with two servers.
% \end{theorem}

\begin{proof}
    We reduce from \textsc{Partition}, which is known to be NP-complete.
    Recall that in the \textsc{Partition} problem we are given a set $\setL$ of $M$~natural numbers $l_1,\dots,l_M\in\naturals$, and we need to decide whether a partition of~$\setL$ into two subsets with equal total weights exists.
    Formally, $\setL\in\textsc{Partition}$ if and only if $\exists \setL_1\subseteq \setL$ s.t. $\sum\limits_{l\in \setL_1}l = \sum\limits_{l\in \setL\setminus \setL_1}l$.
    
    Given an instance~$\setL$ to \textsc{Partition}, we create the following system consisting of $N=2$ servers and $M=|\setL|$ dispatchers.
    The system starts at a state in which the length of both queues is 0, \ie $\{Q_1,Q_2\}=\{0,0\}$, and for each dispatcher~$m\in \setM$, its arrivals are set to $l_m$ (\ie $a^{(m)}=l_m$).
    Obviously, the reduction is polynomial (in fact, it is linear) in~$|\setL|$. We are left to show that
    \[
        \mathbb{E}\lVert Q^*-\bar{Q}\rVert_2^2=0 \iff \setL\in\textsc{Partition}.
    \]
    Notice that $a=\sum\limits_{m\in\setM}a^{(m)} = \sum\limits_{l\in \setL}l$, hence, $Q^*=\langle\frac{a}{2},\frac{a}{2}\rangle$.% as well as $\receives_1^*=\receives_2^*=\frac{a}{2}$.
    
    \T{Direction 1.} Assume that $\mathbb{E}\lVert Q^*-\bar{Q}\rVert_2^2=0$. Since $\lVert Q^*-\bar{Q}\rVert_2^2 \ge 0$, it is immediate that $\mathbb{P}(Q^*=\bar{Q}) > 0$. In turn, this means that it is possible to equally divided the jobs between the two servers and therefore $\setL\in\textsc{Partition}$.
    
    \T{Direction 2.} Assume that $\setL\in\textsc{Partition}$. By definition, there exists $\setL_1\subseteq \setL$ such that $\sum\limits_{l\in \setL_1}l = \sum\limits_{l\in \setL\setminus \setL_1}l$. 
    For each~$l_m\in\setL_1$ dispatcher~$m$ sends its batch of jobs to server~$1$ with probability~$1$.
    Similarly, for each~$l_m\in\setL\setminus\setL_1$ dispatcher~$m$ sends its batch of jobs to server~$2$ with probability~$1$.
    This allocation ensures that $\mathbb{P}(Q^*=\bar{Q}) = 1$ and therefore $\lVert Q^*-\bar{Q}\rVert_2^2 \ge 0$.
\end{proof}

%%%%%%%%%%%%%%%%%%%%%%%%%%%%%%%%%%%%%%%%%%%%%%%%%%%%%%%%%%%%%%%%%%
%%%%%%%%%%%%%%%%%%%%%%%%%%%%%%%%%%%%%%%%%%%%%%%%%%%%%%%%%%%%%%%%%%

\section{Strong stability}\label{app:stabiltiy}

For convenience, we restate the relevant assumptions of our model.
We consider a discrete-time system with a set $\setM = \set{1,\ldots,M}$ of dispatchers distributing arriving jobs among a set $\setN = \set{1,\ldots,N}$ of servers. 
We denote by $a^{(m)}(t)$ the number of exogenous job arrivals at dispatcher $m$ at the beginning of round $t$ and denote $a(t) = \sum_{m \in \setM} a^{(m)}(t)$. We assume, for all $m$,
\begin{equation}\label{eq:arrivals}
    \brac{a^{(m)}(t)}_{t=0}^{\infty} \text{ is an $i.i.d.$ process}, \quad \E[a^{(m)}(0)] = \lambda^{(1)}, \quad \E [(a^{(m)}(0))^2] = \lambda^{(2)}.
\end{equation}
Each server has a FIFO queue for storing pending jobs. Let $Q_n(t)$ be the queue length of server~$n$ at the beginning of round $t$ (before any job arrivals and departures) and denote $Q(t) \triangleq  (Q_1(t),\dots,Q_N(t))$.
Let $s_n(t)$ be the potential service offered to the queue at server $n$ at round $t$. That is, $s_n(t)$ is the maximum number of jobs that can be completed by server $n$ at round $t$. We assume that, for all $n$,  
\begin{equation}\label{eq:departures}
    \{s_n(t)\}_{t=0}^{\infty} \text{ is an $i.i.d.$ process}, \quad \E[s_n(0)] = \mu^{(1)}, \quad \E [(s_n(0))^2] = \mu^{(2)}.
\end{equation}
We assume the total expected arrival rate to the system is admissible. That is, we assume that there exists an $\epsilon > 0$ such that $M \lambda^{(1)} = N \mu^{(1)} - \epsilon$.

We prove that our dispatching policies are strongly stable. Specifically, the strong stability proof we conduct applies to all our policies, \ie \TWLs, L-\TWLs and \TWLs$^{ts}$. Our proof follows the same lines as in \cite{vargaftik2020lsq} with a few key modifications that capture the somewhat different nature of our dispatching policies in which the dispatching probability to a specific server is \emph{dependent} on the arrival rate. 

\begin{definition}[Strong stability] A load balancing system is said to be strongly stable if for any admissible arrival rate it holds that 
$$\limsup_{T \to \infty} \frac{1}{T} \sum_{t=0}^{T-1} \sum_{n\in\setN} \E \Big[Q_n(t)\Big] < \infty~.$$
\end{definition}

Strong stability is a strong form of stability often employed in discrete-time queuing systems. Moreover, in our model, where we assume the existence of the first two moments of the arrival and departure processes, strong stability also implies throughput optimality as well as other commonly considered forms of stability, such as steady state stability, rate stability, mean rate stability and more (the reader is referred to \cite{neely2010stability,neely2007optimal,georgiadis2006resource} for details).

Let $\receives_n(t) = \sum_{m\in\setM} \receives_n^{(m)}(t)$. Then, the queue dynamics of server $n$ is given by
\begin{equation}\label{eq:queue_dynamics}
Q_n(t+1) = \max \{0, Q_n(t) + \receives_n(t) - s_n(t)\}~. 
\end{equation}
Squaring both sides of \eqref{eq:queue_dynamics}, rearranging, and omitting terms yields, 
\begin{equation}\label{eq:basic_nneq}
    \bp{Q_n(t+1)}^2 - \bp{Q_n(t)}^2 \le \bp{\receives_n(t)}^2 + \bp{s_n(t)}^2 - 2Q_n(t)\bp{s_n(t)-\receives_n(t)}~.    
\end{equation}
Summing \eqref{eq:basic_nneq} over the servers yields
\begin{equation}\label{eq:sum_over_servers}
  \sum_{n\in\setN} \bp{Q_n(t+1)}^2 - \sum_{n\in\setN} \bp{Q_n(t)}^2 \le \sum_{n\in\setN} \bp{\receives_n(t)}^2 + \sum_{n\in\setN} \bp{s_n(t)}^2 - 2\sum_{n\in\setN} Q_n(t)\bp{s_n(t)-\receives_n(t)}~.
\end{equation}

We now split the proof into the unsplittable and then the splittable case.

\subsection{The unsplittable case}

For each $(n,m) \in \setN \times \setM$, let $I_n^{(m)}(t)$ be an indicator function that takes the value of 1 with probability $1/N$ and 0 otherwise such that $\sum_{n\in\setN} I_n^{(m)}(t) = 1 \quad \forall m \in \setM$. We now rewrite \eqref{eq:sum_over_servers} by using $\receives_n(t) = \sum_{m\in\setM} \receives_n^{(m)}(t)$ and then adding and subtracting the term $2 \sum_{n\in\setN}  \sum_{m\in\setM} I_n^{(m)}(t) a^{(m)}(t) Q_n(t)$ from the right hand side of the equation. This yields
\begin{equation}\label{eq:upsplittable_abc}
\begin{split}
  & \sum_{n\in\setN} \bp{Q_n(t+1)}^2 - \sum_{n\in\setN} \bp{Q_n(t)}^2 \le \underbrace{\sum_{n\in\setN} \bp{\receives_n(t)}^2 + \sum_{n\in\setN} \bp{s_n(t)}^2}_{(a)} \cr
  &  - 2\underbrace{\sum_{n\in\setN} Q_n(t)\bp{s_n(t)-\sum_{m\in\setM} I_n^{(m)}(t) a^{(m)}(t)}}_{(b)} \cr
  &+ 2\underbrace{\sum_{n\in\setN} Q_n(t)\bp{\sum_{m\in\setM} \receives_n^{(m)}(t)-\sum_{m\in\setM} I_n^{(m)}(t) a^{(m)}(t)}}_{(c)}.
\end{split}
\end{equation}

We now analyze term (a) in \eqref{eq:upsplittable_abc}. Taking expectation and using \eqref{eq:arrivals} and \eqref{eq:departures} we obtain

\begin{equation}\label{eq:unsplittable_a}
\begin{split}
&\mathbb{E} \bigg[\sum_{n\in\setN} \bp{\receives_n(t)}^2 + \sum_{n\in\setN} \bp{s_n(t)}^2\bigg] \le \mathbb{E} \bigg[\bp{\sum_{n\in\setN} \receives_n(t)}^2 \bigg] + N\MuTwo \cr
& =  \mathbb{E} \bigg[\bp{\sum_{m\in\setM} a^{(m)}(t)}^2 \bigg] + N\MuTwo = M\LambdaTwo + M(M-1)(\LambdaOne)^2 + N\MuTwo \triangleq A.
\end{split}
\end{equation}

Next, for (b) in \eqref{eq:upsplittable_abc}, taking expectation and using \eqref{eq:arrivals}, \eqref{eq:departures}, the definition of $I_n^{(m)}(t)$ and the admissibility of the system, we obtain
\begin{equation}\label{eq:unsplittable_b}
\begin{split}
\mathbb{E} \sum_{n\in\setN} Q_n(t)\bp{s_n(t)-\sum_{m\in\setM} I_n^{(m)}(t) a^{(m)}(t)} =  \sum_{n\in\setN} \bp{\MuOne-\frac{M\LambdaOne}{N}} \mathbb{E} [Q_n(t)] = \frac{\epsilon}{N} \sum_{n\in\setN} \mathbb{E} [Q_n(t)].
\end{split}
\end{equation}

Let $\tilde{Q}_n^{(m)}(t)$ be the local state of server $n$ at dispatcher $m$ at the beginning of round $t$. For complete information, it trivially holds that $\tilde{Q}_n^{(m)}(t)=Q_n^{(m)}(t)$. Otherwise, there exists a constant $C$, such that it holds $\mathbb{E} |Q_n(t) - \tilde{Q}_n^{(m)}(t)| \le C$ for all $n,m,t$. Indeed, this holds for all the local state (\ie array) updates for any $\rho>0$.\footnote{Recall that $\rho$ is the fraction of server each dispatcher samples uniformly at random during the communication phase (\ie phase 4) of each round.} (full derivation can be found in the proof of Theorem 2, followed by Proposition 1 in \cite{vargaftik2020lsq}).

We next turn to analyze term (c) in \eqref{eq:upsplittable_abc}. We substitute $Q_n(t)$ by $Q_n(t)-\tilde{Q}_n^{(m)}(t)+\tilde{Q}_n^{(m)}(t)$ and rearrange. This yields
\begin{equation}\label{eq:add_substruct_trick}
\begin{split}
  &\sum_{n\in\setN} Q_n(t)\bp{\sum_{m\in\setM} \receives_n^{(m)}(t)-\sum_{m\in\setM} I_n^{(m)}(t) a^{(m)}(t)} = \cr
  & \sum_{n\in\setN} \sum_{m\in\setM} \tilde{Q}_n^{(m)}(t) (\receives_n^{(m)}(t)-I_n^{(m)}(t) a^{(m)}(t)) + \cr
  & \sum_{n\in\setN} \sum_{m\in\setM} (Q_n(t) - \tilde{Q}_n^{(m)}(t)) (\receives_n^{(m)}(t)-I_n^{(m)}(t) a^{(m)}(t)).
\end{split}
\end{equation}
Now, we change the order of summation, use the triangle inequality and take the expectation of \eqref{eq:add_substruct_trick}. This yields,
\begin{equation}\label{eq:add_substruct_trick_expectation}
\begin{split}
  &\mathbb{E} \sum_{n\in\setN} Q_n(t)\bp{\sum_{m\in\setM} \receives_n^{(m)}(t)-\sum_{m\in\setM} I_n^{(m)}(t) a^{(m)}(t)} \le \cr
  & \sum_{m\in\setM} \mathbb{E} \sum_{n\in\setN} \tilde{Q}_n^{(m)}(t) (\receives_n^{(m)}(t)-I_n^{(m)}(t) a^{(m)}(t)) + \cr
  &\sum_{n\in\setN} \sum_{m\in\setM} \mathbb{E} |Q_n(t) - \tilde{Q}_n^{(m)}(t)| |\receives_n^{(m)}(t)-I_n^{(m)}(t) a^{(m)}(t)|.
\end{split}
\end{equation}

We now handle the two summation terms separately. By the linearity of expectation and the law of total expectation we have, 

\begin{equation}
\begin{split}
&\mathbb{E} \sum_{n\in\setN} \tilde{Q}_n^{(m)}(t) \receives_n^{(m)}(t) - \mathbb{E} \sum_{n\in\setN} \tilde{Q}_n^{(m)}(t) I_n^{(m)}(t) a^{(m)}(t) = \cr 
&\mathbb{E} \sum_{n\in\setN} \tilde{Q}_n^{(m)}(t) \mathbb{E} [\receives_n^{(m)}(t) \big| \tilde{Q}^{(m)}(t), a^{(m)}(t)] - \mathbb{E} \sum_{n\in\setN} \tilde{Q}_n^{(m)}(t) \frac{a^{(m)}(t)}{N}.
\end{split}
\end{equation}

Now, by the definition of our policy it holds that

(i)~$\tilde{Q}_n^{(m)}(t) \le \tilde{Q}_{n'}^{(m)}(t)$ iff $\mathbb{E} [\receives_n^{(m)}(t) \big| \tilde{Q}^{(m)}(t), a^{(m)}(t)] \ge \mathbb{E} [\receives_{n'}^{(m)}(t) \big| \tilde{Q}^{(m)}(t), a^{(m)}(t)]$ for any $(n,{n'}) \in \setN \times \setN$. This is because by \eqref{eq:unsplittableWfProbabilities}, $$\mathbb{E} [\receives_n^{(m)}(t) \big| \tilde{Q}^{(m)}(t), a^{(m)}(t)] = a^{(m)}(t) \cdot \max \left\{ 0,\frac{\receives_n^* - (a^{(m)}-\sum_{k \notin U} \receives_k^*) / {|\setU|} } {(M-1)a^{(m)}(t)} \right\}.$$ 
This term is monotonically increasing in $\receives_n^* =\textsc{WaterLevel}(\tilde{Q}^{(m)}(t),M a^{(m)}(t)) - \tilde{Q}_n^{(m)}(t)$. Thus, it is monotonically decreasing in $\tilde{Q}_n^{(m)}(t)$ and the claim holds. It also holds that 

(ii)~$\sum_{n\in\setN} \mathbb{E} [\receives_n^{(m)}(t) \big| \tilde{Q}^{(m)}(t), a^{(m)}(t)] = a^{(m)}(t) = \sum_{n\in\setN} \frac{a^{(m)}(t)}{N}$. Recall that each dispatcher $m$ sends jobs according to \uTWF{} based on its local array $\tilde{Q}^{(m)}(t)$. Now, observe that the vector $\{ \tilde{Q}_n^{(m)}(t) \mathbb{E} [\receives_n^{(m)}(t) \big| \tilde{Q}^{(m)}(t), a^{(m)}(t)] \}_{n\in\setN}$ is \emph{majorized} by the vector $\{ \tilde{Q}_n^{(m)}(t) \frac{a^{(m)}(t)}{N} \}_{n\in \setN}$.%
\footnote{A similar majorization argument (which we also use in the splittable case) was recently applied by \cite{zhou2020asymptotically} to prove stability in a similar setting with local states.
However, their criteria of \emph{tilted} dispatching policies does not capture our dependence of the dispatching on the arrival process and thus cannot be directly applied in our model.} 
Therefore, 
\begin{equation}\label{eq:left_add_substruct_trick_expectation}
\mathbb{E} \sum_{n\in\setN} \tilde{Q}_n^{(m)}(t) (\receives_n^{(m)}(t)-I_n^{(m)}(t) a^{(m)}(t)) \le 0 \quad \forall m \in \setM. 
\end{equation}

We next handle the second term of \eqref{eq:add_substruct_trick_expectation}. Recall that all our local array update policies respect that $\mathbb{E} |Q_n(t) - \tilde{Q}_n^{(m)}(t)| \le C$ for any $t$ independently of the arrivals, we obtain,
\begin{equation}\label{eq:right_add_substruct_trick_expectation}
\sum_{n\in\setN} \sum_{m\in\setM} \mathbb{E} |Q_n(t) - \tilde{Q}_n^{(m)}(t)| |\receives_n^{(m)}(t)-I_n^{(m)}(t) a^{(m)}(t)| \le \LambdaOne NMC.
\end{equation}

Using \eqref{eq:left_add_substruct_trick_expectation} and \eqref{eq:right_add_substruct_trick_expectation} in \eqref{eq:add_substruct_trick_expectation} yields,

\begin{equation}\label{eq:unsplittable_c}
 \mathbb{E} \sum_{n\in\setN} Q_n(t)\bp{\sum_{m\in\setM} \receives_n^{(m)}(t)-\sum_{m\in\setM} I_n^{(m)}(t) a^{(m)}(t)} \le  0 + \LambdaOne NMC = \LambdaOne NMC. 
\end{equation}
Finally, taking the expectation of \eqref{eq:upsplittable_abc} as well as using \eqref{eq:unsplittable_a}, \eqref{eq:unsplittable_b}, \eqref{eq:unsplittable_c} and rearranging yields, 
\begin{equation}\label{eq:upsplittable_abc_expectation}
\begin{split}
  & \mathbb{E} \bigg[ \sum_{n\in\setN} \bp{Q_n(t+1)}^2 - \sum_{n\in\setN} \bp{Q_n(t)}^2 \bigg] \le A + 2\LambdaOne NMC - 2\frac{\epsilon}{N} \sum_{n\in\setN} \mathbb{E} [Q_n(t)].
\end{split}
\end{equation}

Next, summing \eqref{eq:upsplittable_abc_expectation} over rounds $0,\ldots,T{-}1$, multiplying by $\frac{N}{2\epsilon T}$ and rearranging yields,
\begin{equation}\label{eq:upsplittable_abc_expectation_simple}
\frac{1}{T} \sum_{t=0}^{T-1} \sum_{n\in\setN} \E \Big[Q_n(t)\Big] \le \frac{AN + 2\LambdaOne N^2 MC}{2\epsilon} + \frac{N}{2\epsilon T} \E \sum_{n\in\setN} \bp{Q_n(0)}^2,
\end{equation}
where we omitted the non-positive term $\E \Big[-\sum_{n\in\setN} \bp{Q_n(T)}^2\Big]$ as a results of the telescopic series at the left hand side of \eqref{eq:upsplittable_abc_expectation}. Taking limits of \eqref{eq:upsplittable_abc_expectation_simple} and making the standard assumption that the system is initialized with bounded queue lengths, \ie \mbox{$\E \Big[\sum_{n\in\setN} \bp{Q_n(0)}^2\Big] < \infty$} yields, 
\begin{equation}\label{eq:unsplittable_final}
\begin{split}
    &  \limsup_{T \to \infty} \frac{1}{T} \sum_{t=0}^{T-1} \sum_{n\in\setN} \E \Big[Q_n(t)\Big] \le \frac{AN + 2\LambdaOne N^2MC}{2\epsilon}.
\end{split}
\end{equation}
This concludes the proof for the unsplittable case.

\subsection{The splittable case}

In the splittable case, each job can be sent to a different server. For completeness, we rewrite the proof of the unsplittable case to capture this property. For each $(n,m,k)$, let $I_n^{m,k}(t)$ be an indicator function that takes the value of 1 with probability $1/N$ and 0 otherwise such that $\sum_{n\in\setN} I_n^{m,k}(t) = 1 \quad \forall m \in \setM, k \in [1,\ldots,a^{(m)}(t)]$. We now again rewrite \eqref{eq:sum_over_servers} but this time adding and subtracting the term $2 \sum_{n\in\setN}  \sum_{m\in\setM} \sum_{k=1}^{a^{(m)}(t)} I_n^{m,k}(t) Q_n(t)$ from the right hand side of the equation. This yields, 
\begin{equation}\label{eq:splittable_abc}
\begin{split}
  & \sum_{n\in\setN} \bp{Q_n(t+1)}^2 - \sum_{n\in\setN} \bp{Q_n(t)}^2 \le \underbrace{\sum_{n\in\setN} \bp{\receives_n(t)}^2 + \sum_{n\in\setN} \bp{s_n(t)}^2}_{(a)} \cr
  &  - 2\underbrace{\sum_{n\in\setN} Q_n(t)\bp{s_n(t)-\sum_{m\in\setM} \sum_{k=1}^{a^{(m)}(t)} I_n^{m,k}(t)}}_{(b)} \cr
  & + 2\underbrace{\sum_{n\in\setN} Q_n(t)\bp{\sum_{m\in\setM} \receives_n^{(m)}(t)-\sum_{m\in\setM} \sum_{k=1}^{a^{(m)}(t)} I_n^{m,k}(t)}}_{(c)}.
\end{split}
\end{equation}

Term (a) in \eqref{eq:splittable_abc} is exactly as in the unsplittable case and its analysis is unchanged. For term (b), while $I_n^{(m)}(t) a^{(m)}(t)$ is replaced by $\sum_{k=1}^{a^{(m)}(t)} I_n^{m,k}(t)$, taking expectation and using \eqref{eq:arrivals}, \eqref{eq:departures}, the definition of $I_n^{m,k}(t)$ and the admissibility of the system, still yields the same result. that is,
\begin{equation}\label{eq:splittable_b}
\begin{split}
\mathbb{E} \sum_{n\in\setN} Q_n(t)\bp{s_n(t)-\sum_{m\in\setM} \sum_{k=1}^{a^{(m)}(t)} I_n^{m,k}(t)} =  \sum_{n\in\setN} \bp{\MuOne-\frac{M\LambdaOne}{N}} \mathbb{E} [Q_n(t)] = \frac{\epsilon}{N} \sum_{n\in\setN} \mathbb{E} [Q_n(t)].
\end{split}
\end{equation}
This is because $a^{(m)}(t)$ and $I_n^{m,k}(t)$ are independent which allows us to use Wald's identity. That is, we use $\mathbb{E} [\sum_{k=1}^{a^{(m)}(t)} I_n^{m,k}(t)] = \mathbb{E} [a^{(m)}(t)] \mathbb{E} [I_n^{m,1}(t)]$. 

We next turn to analyze term (c) in \eqref{eq:splittable_abc} by repeating the same steps as in the unsplittable case but with substituting $I_n^{(m)}(t) a^{(m)}(t)$ with $\sum_{k=1}^{a^{(m)}(t)} I_n^{m,k}(t)$. We again substitute $Q_n(t)$ by $Q_n(t)-\tilde{Q}_n^{(m)}(t)+\tilde{Q}_n^{(m)}(t)$ and rearrange terms. This yields,
\begin{equation}\label{eq:add_substruct_trick_split}
\begin{split}
  &\sum_{n\in\setN} Q_n(t)\bp{\sum_{m\in\setM} \receives_n^{(m)}(t)-\sum_{m\in\setM} \sum_{k=1}^{a^{(m)}(t)} I_n^{m,k}(t)} = \cr
  & \sum_{n\in\setN} \sum_{m\in\setM} \tilde{Q}_n^{(m)}(t) (\receives_n^{(m)}(t)-\sum_{k=1}^{a^{(m)}(t)} I_n^{m,k}(t)) + \cr
  &\sum_{n\in\setN} \sum_{m\in\setM} (Q_n(t) - \tilde{Q}_n^{(m)}(t)) (\receives_n^{(m)}(t)-\sum_{k=1}^{a^{(m)}(t)} I_n^{m,k}(t)).
\end{split}
\end{equation}
Now, we change the order of summation, use the triangle inequality and take the expectation of \eqref{eq:add_substruct_trick_split}. This yields
\begin{equation}\label{eq:add_substruct_trick_expectation_split}
\begin{split}
  &\mathbb{E} \sum_{n\in\setN} Q_n(t)\bp{\sum_{m\in\setM} \receives_n^{(m)}(t)-\sum_{m\in\setM} \sum_{k=1}^{a^{(m)}(t)} I_n^{m,k}(t)} \le \cr
  & \sum_{m\in\setM} \mathbb{E} \sum_{n\in\setN} \tilde{Q}_n^{(m)}(t) (\receives_n^{(m)}(t)-\sum_{k=1}^{a^{(m)}(t)} I_n^{m,k}(t)) + \cr
  &\sum_{n\in\setN} \sum_{m\in\setM} \mathbb{E} |Q_n(t) - \tilde{Q}_n^{(m)}(t)| |\receives_n^{(m)}(t)-\sum_{k=1}^{a^{(m)}(t)} I_n^{m,k}(t)|.
\end{split}
\end{equation}

We begin by handling the first summation term of \eqref{eq:add_substruct_trick_expectation_split}. By the linearity of expectation, the law of total expectation and the independence of $\tilde{Q}^{(m)}(t)$ and $I_n^{m,k}(t)$ we have
\begin{equation}
\begin{split}
&\mathbb{E} \sum_{n\in\setN} \tilde{Q}_n^{(m)}(t) \receives_n^{(m)}(t) - \mathbb{E} \sum_{n\in\setN} \tilde{Q}_n^{(m)}(t) \sum_{k=1}^{a^{(m)}(t)} I_n^{m,k}(t)= \cr &\mathbb{E} \sum_{n\in\setN} \bigg(\tilde{Q}_n^{(m)}(t) \mathbb{E} [\receives_n^{(m)}(t) \big| \tilde{Q}^{(m)}(t), a^{(m)}(t)]\bigg) - \mathbb{E} \sum_{n\in\setN} \bigg( \tilde{Q}_n^{(m)}(t) \mathbb{E} [\sum_{k=1}^{a^{(m)}(t)}  I_n^{m,k}(t) \big| a^{(m)}(t)]\bigg).
\end{split}
\end{equation}

Now, by the definition of our splittable policy it also holds that 

(i)~$\tilde{Q}_n^{(m)}(t) \le \tilde{Q}_{n'}^{(m)}(t)$ iff $\mathbb{E} [\receives_n^{(m)}(t) \big| \tilde{Q}^{(m)}(t), a^{(m)}(t)] \ge \mathbb{E} [\receives_{n'}^{(m)}(t)\big| \tilde{Q}^{(m)}(t), a^{(m)}(t)]$ for any $(n,n') \in \setN \times \setN$. This is because $$\mathbb{E} [\receives_n^{(m)}(t) \big| \tilde{Q}^{(m)}(t), a^{(m)}(t)] = \sum_{k=1}^{a^{(m)}(t)} max \{0, \frac{\receives_n^*-1/\underWL}{a^{(m)}(t) M -1}\}.$$
This term is monotonically increasing in $\receives_n^* =\textsc{WaterLevel}(\tilde{Q}^{(m)}(t),M a^{(m)}(t)) - \tilde{Q}_n^{(m)}(t)$. Thus, it is monotonically decreasing in $\tilde{Q}_n^{(m)}(t)$ and the claim holds. It also holds that, 

(ii)~$\sum_{n\in\setN} \mathbb{E} [\receives_n^{(m)}(t) \big| \tilde{Q}^{(m)}(t), a^{(m)}(t)] = a^{(m)}(t) = \sum_{n\in\setN} \mathbb{E} [\sum_{k=1}^{a^{(m)}(t)} I_n^{m,k}(t) \big| a^{(m)}(t)]$. 

Therefore, the vector $\{ \tilde{Q}_n^{(m)}(t) \mathbb{E} [\receives_n^{(m)}(t) \big| \tilde{Q}^{(m)}(t), a^{(m)}(t)] \}_{n\in\setN}$ is \emph{majorized} by the vector $\tilde{Q}_n^{(m)}(t) \mathbb{E} [\sum_{k=1}^{a^{(m)}(t)}  I_n^{m,k}(t) \big| a^{(m)}(t)] \}_{n\in\setN}$. This is because $\mathbb{E} [\sum_{k=1}^{a^{(m)}(t)}  I_n^{m,k}(t) \big| a^{(m)}(t)] = \frac{a^{(m)}(t)}{N}$ for all $n \in \setN$. Therefore, 
\begin{equation}\label{eq:left_add_substruct_trick_expectation_split}
\mathbb{E} \sum_{n\in\setN} \tilde{Q}_n^{(m)}(t) (\receives_n^{(m)}(t)-\sum_{k=1}^{a^{(m)}(t)} I_n^{m,k}(t)) \le 0 \quad \forall m \in \setM. 
\end{equation}

Now, we handle the second summation term of \eqref{eq:add_substruct_trick_expectation_split}. Recall once more that $\mathbb{E} |Q_n(t) - \tilde{Q}_n^{(m)}(t)| \le C$ for any $t$ and independently of the arrivals. Hence, we obtain,
\begin{equation}\label{eq:right_add_substruct_trick_expectation_split}
\sum_{n\in\setN} \sum_{m\in\setM} \mathbb{E} |Q_n(t) - \tilde{Q}_n^{(m)}(t)| |\receives_n^{(m)}(t)-\sum_{k=1}^{a^{(m)}(t)} I_n^{m,k}(t)| \le \LambdaOne NMC.
\end{equation}

Using \eqref{eq:left_add_substruct_trick_expectation_split} and \eqref{eq:right_add_substruct_trick_expectation_split} in \eqref{eq:add_substruct_trick_expectation_split} yields,

\begin{equation}\label{eq:splittable_c}
 \mathbb{E} \sum_{n\in\setN} Q_n(t)\bp{\sum_{m\in\setM} \receives_n^{(m)}(t)-\sum_{m\in\setM} \sum_{k=1}^{a^{(m)}(t)} I_n^{m,k}(t)} \le \LambdaOne NMC. 
\end{equation}
Finally, taking the expectation of \eqref{eq:splittable_abc} as well as using \eqref{eq:unsplittable_a}, \eqref{eq:splittable_b}, \eqref{eq:splittable_c} and rearranging yields 
\begin{equation}\label{eq:upsplittable_abc_expectation_split}
\begin{split}
  & \mathbb{E} \bigg[ \sum_{n\in\setN} \bp{Q_n(t+1)}^2 - \sum_{n\in\setN} \bp{Q_n(t)}^2 \bigg] \le A + 2\LambdaOne NMC - 2\frac{\epsilon}{N} \sum_{n\in\setN} \mathbb{E} [Q_n(t)].
\end{split}
\end{equation}

The remaining analysis is the same as in the unsplittable case.
This concludes the proof.

%%%%%%%%%%%%%%%%%%%%%%%%%%%%%%%%%%%%%%%%%%%%%%%%%%%%%%%%%%%%%%%%%%
%%%%%%%%%%%%%%%%%%%%%%%%%%%%%%%%%%%%%%%%%%%%%%%%%%%%%%%%%%%%%%%%%%

\section{Additional Evaluation}\label{app:moreevaluation}

This section provides additional evaluation results, including run-time measurements, heavy-tailed arrivals, and larger systems.

\subsection{Runtime measurements}\label{app:rtmeasuments}

In TWF, a dispatcher computes the water level according to Algorithm \ref{alg:WaterLevel}, which runs in $O(\min(N,a))$ time complexity if the queues are presorted. Consequently, the time complexity of computing the dispatching assignment is dominated by sorting, i.e., $O(N\log N)$. Asymptotically, this is equivalent to the complexity of commonly used algorithms such as $JSQ$, which rely on sorting. 
To illustrate that TWF admits computational overheads similar to that of $JSQ$, we have conducted runtime measurements in a challenging and more time-consuming splittable scenario with a load of $\rho=0.99$, ten dispatchers, and an increasing number of servers.

The results depicted in Figure \ref{fig:runtime} show that TWF scales similarly to $JSQ$ as the number of servers increases (note that the runtime complexity depends only on the number of servers).
This is evident to the practicality of TWF since $JSQ$ is not reported to cause running time issues in large-scale systems.

\begin{figure*}[h]
\centering
\includegraphics[width=0.99\textwidth]{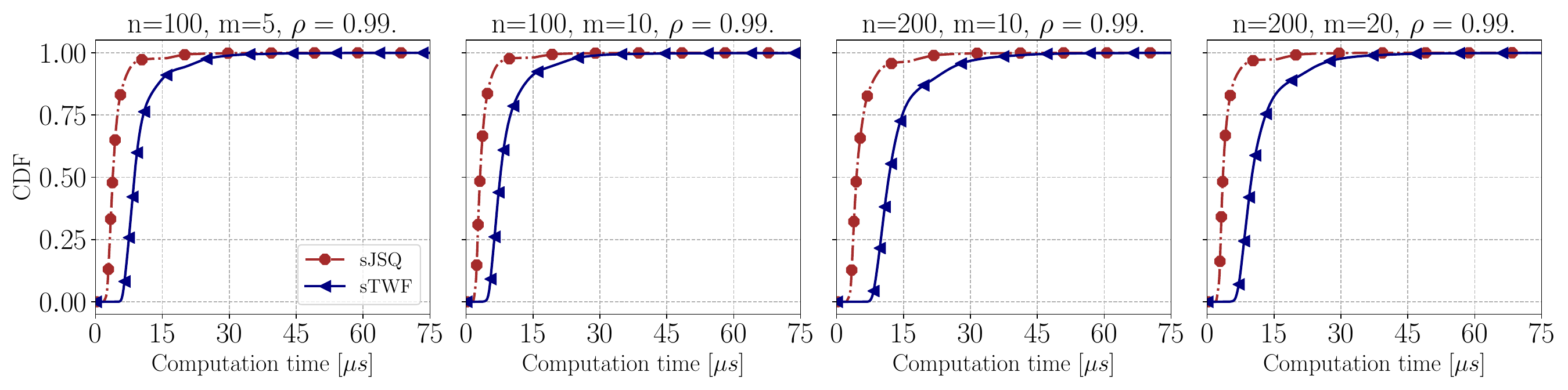}
\caption{Runtime measurements of JSQ and TWF.}
\label{fig:runtime}
\end{figure*}

\subsection{Heavy-tailed arrivals}\label{app:htarrivalseval}

We next explore the effect of a heavy tail (Log-normal) distribution. Specifically, we focus on the splittable case and set the Log-normal distribution parameters to~$\mu=0$ and~$\sigma$ to satisfy the desired load~$\rho$. 
Figure~\ref{fig:Lognormal mean} shows the resulting average response times for four different systems.
Figure~\ref{fig:Lognormal tail} shows the resulting tail distributions for the same different systems. Here, the gap between TWF and JSQ somewhat decreases. However, the performance is still significantly in favor of TWF.

\begin{figure}
     \centering
     \includegraphics[width=\textwidth]{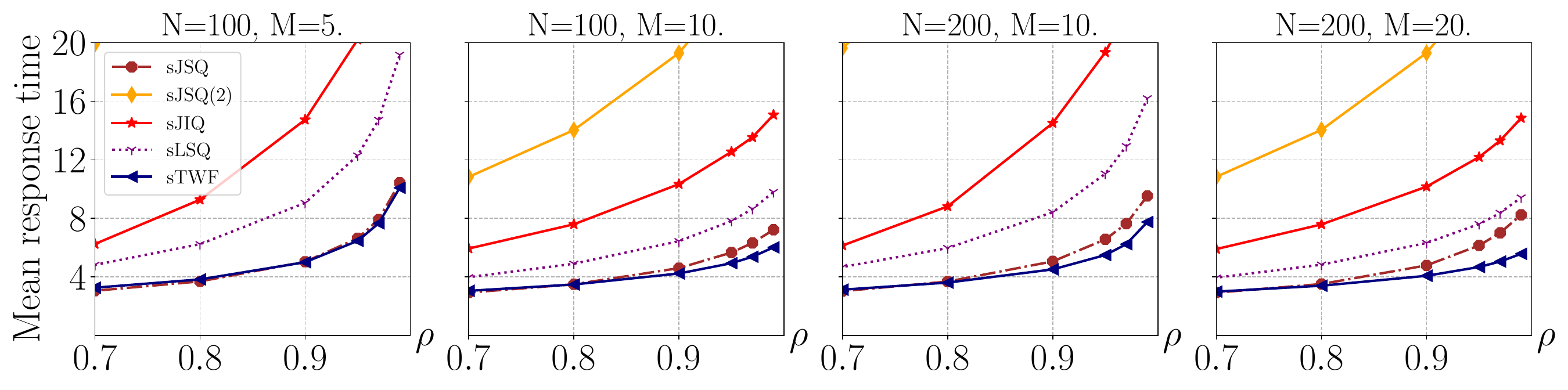}
     \caption{Heavy tail (Log-normal) arrival processes. Average job response times as a function of the load over four different systems. The $x$-axis represents load~$\rho$. The $y$-axis represents the average response time.}
     \label{fig:Lognormal mean}
\end{figure}

\begin{figure}
     \centering
     \includegraphics[width=\textwidth]{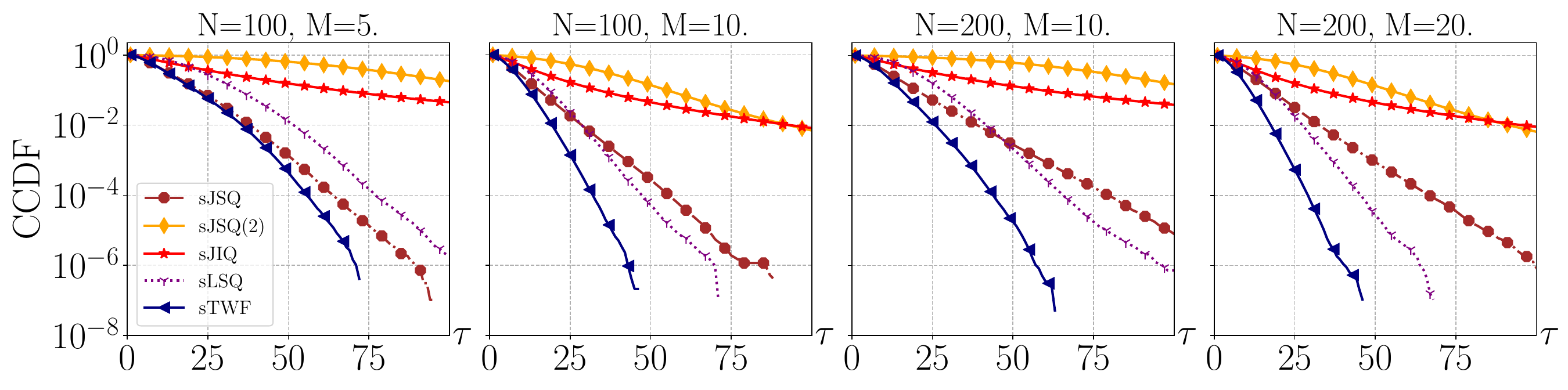}
     \caption{Heavy tail (Log-normal) arrival processes. Response-time tail distributions at high load ($\rho=0.99$). The $x$-axis represents the response time (denoted by $\tau$). The $y$-axis represents the CCDF.}
     \label{fig:Lognormal tail}
\end{figure}

\subsection{Larger systems}\label{app:largersystemseval}

We next conduct simulations with 500 servers. The results are shown in Figures~\ref{fig:n500 mean} and~\ref{fig:n500 tail} and decisively show that TWF significantly improves performance in larger-scale systems.

\begin{figure}
     \centering
     \includegraphics[width=\textwidth] {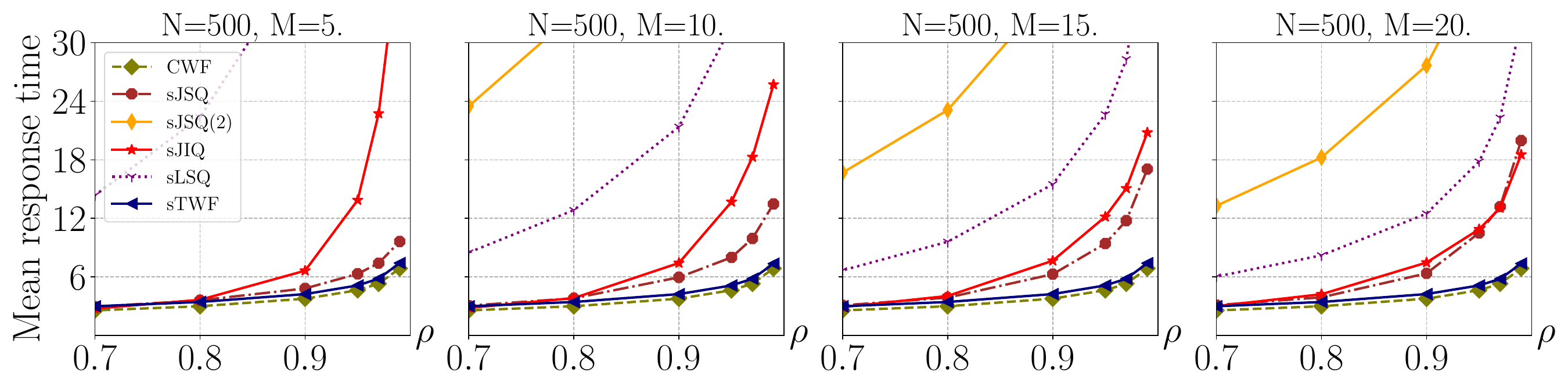}
     \caption{Comparison over a larger system. Centralized water-filling (CWF) is used as a benchmark to exhibit the loss due to the distribution. The figure shows the average job response time as a function of the load over four different systems. The $x$-axis represents load $\rho$. The $y$-axis represents the average response time.}
     \label{fig:n500 tail}
\end{figure}

\begin{figure}
     \centering
     \includegraphics[width=\textwidth]{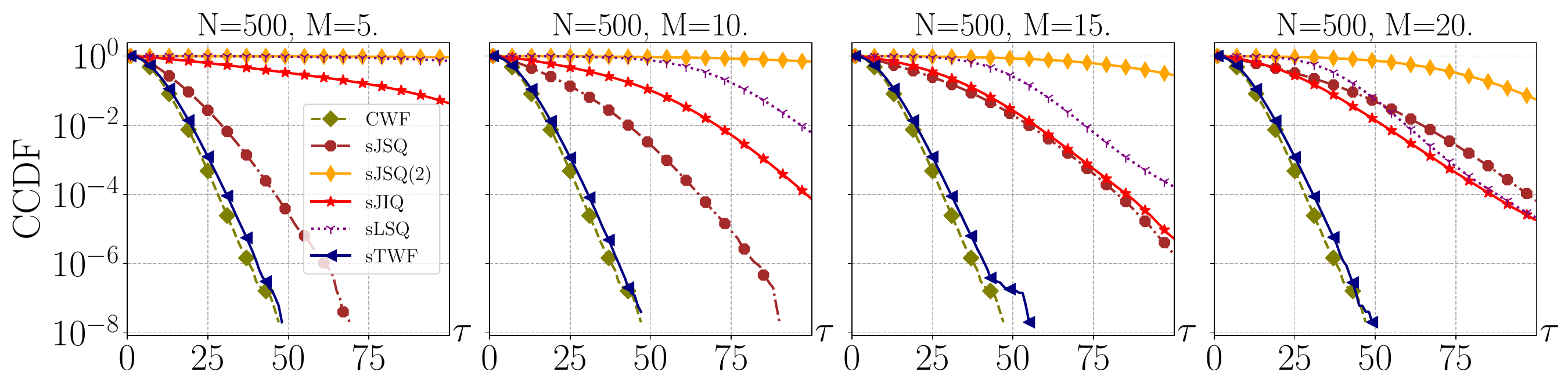}
     \caption{Comparison over a larger system. Centralized water-filling (CWF) is used as a benchmark to exhibit the loss due to the distribution. Response-time tail distributions at high load ($\rho=0.99$). The $x$-axis represents the response time (denoted by $\tau$). The $y$-axis represents the CCDF.}
     \label{fig:n500 mean}
\end{figure}

\end{document}